\newcommand{\subparagraph}{}
\newcommand\NoDo{\renewcommand\algorithmicdo{}}
\newtheorem{definition}{Definition}
\newtheorem{theorem}{Theorem}
\newtheorem{remark}{Remark}
\newtheorem{lemma}{Lemma}
\renewcommand\algorithmicdo{}
\let\emptyset\varnothing
\begin{document}

\title{Let's Share: A Game-Theoretic Framework for Resource Sharing  in Mobile Edge Clouds}

\author{Faheem~Zafari,~\IEEEmembership{Student Member,~IEEE,}
       Kin~K. Leung,~\IEEEmembership{Fellow,~IEEE,} 
        Don~Towsley,~\IEEEmembership{Life Fellow,~IEEE,}
          \\
        Prithwish~Basu,~\IEEEmembership{Senior Member,~IEEE,}
         Ananthram~Swami,~\IEEEmembership{Fellow,~IEEE, }
          and~Jian~Li,~\IEEEmembership{Member,~IEEE}\thanks{This paper has been presented in part at the ACM Mobicom Workshop on Technologies for the Wireless Edge, October 2018, India. }
\IEEEcompsocitemizethanks{\IEEEcompsocthanksitem Faheem Zafari and Kin K. Leung are  with the Department
of Electrical and Electronics Engineering, Imperial College London, London,
UK.   \protect\\
E-mail:  \{faheem16,kin.leung\}@imperial.ac.uk
\IEEEcompsocthanksitem  Don Towsley is with College of Information and Computer Sciences, University of Massachusetts Amherst, Amherst, MA 01003, USA.\protect\\
E-mail:  towsley@cs.umass.edu
\IEEEcompsocthanksitem Prithwish Basu is with BBN Technologies,  MA, USA.
\protect\\
Email: prithwish.basu@raytheon.com
\IEEEcompsocthanksitem Ananthram Swami is with the U.S. Army Research Laboratory, Adelphi, MD 20783, USA.
\protect\\
Email: ananthram.swami.civ@mail.mil
\IEEEcompsocthanksitem Jian Li is with  is with the Department of Electrical and Computer Engineering, Binghamton University, the State University of New York Binghamton, NY 13902, USA.
\protect\\
Email: lij@binghamton.edu

}}

\markboth{}
{Zafari \MakeLowercase{\textit{et al.}}: Bare Demo of IEEEtran.cls for Computer Society Journals}

\IEEEtitleabstractindextext{%
\begin{abstract}
Mobile edge computing seeks to provide resources to different delay-sensitive applications. This is a challenging problem as an edge cloud-service provider may not have sufficient resources to satisfy all resource requests. Furthermore, allocating available resources optimally to different applications is also challenging. Resource sharing among different edge cloud-service providers can address the aforementioned limitation as certain service providers may have resources available that can be ``rented'' by other service providers.  However, edge cloud service providers can have different  objectives or \emph{utilities}. Therefore, there is a need for  an efficient and effective mechanism to share resources among service providers, while considering the different objectives of various providers. 
We model resource  sharing as a multi-objective optimization problem and
present a solution framework based on  \emph{Cooperative Game Theory} (CGT). We consider the strategy where each service provider allocates  resources to its native applications first and shares the remaining resources with applications from other service providers. We prove that for a monotonic, non-decreasing utility function, the game is canonical and convex. Hence, the \emph{core} is not empty and the grand coalition is stable. We propose two  algorithms \emph{Game-theoretic Pareto optimal allocation} (GPOA) and \emph{Polyandrous-Polygamous Matching based Pareto Optimal Allocation} (PPMPOA) that  provide allocations from the core. 
Hence the  obtained allocations are \emph{Pareto} optimal and the grand coalition of all the service providers is stable.  
Experimental results confirm that our proposed resource sharing framework improves utilities of edge cloud-service providers and application request  satisfaction. 
\end{abstract}

\begin{IEEEkeywords}
Cooperative game theory, core, edge cloud, resource allocation.
\end{IEEEkeywords}}

\maketitle

\IEEEdisplaynontitleabstractindextext
\IEEEpeerreviewmaketitle

\section{Introduction}\label{sec:intro}
Mobile edge computing is a viable solution to support resource intensive applications (users)\footnote{Throughout the paper, we use the terms ``applications'' and ``users'' interchangeably.} that relies on mobile edge clouds (MECs) placed at the edge of any network \cite{he2018s}. This, in contrast with running applications on user devices (with limited processing capability) or deep in the Internet, usually allows  one-hop communication \cite{jia2015optimal} between  edge clouds and applications in order to  provide the required processing power and reduce application latency. 
The goal for any MEC service provider  is to 
provide resources to applications as much as possible, resulting in an increase in application  satisfaction and \emph{utility} for the provider.
However, a fundamental limitation of MEC is that in contrast to traditional cloud platforms and data centers,  edge clouds have  limited resources and may not always be able to satisfy  application demands for resources \cite{he2018s}.  In addition to commercial settings, this problem of limited resources also occurs in military networks.  For example, MEC service providers are equivalent to military coalition partners, who own various parts of the computing resources.
\par  To address  resource scarcity, different solutions have been proposed in the literature such as purchasing more resources to meet  peak demands, using resources from the same service provider located at a distant location, or  sharing resources among  co-located mobile edge clouds. However, purchasing more resources to meet peak demands is not an efficient solution as the newly purchased resources may not always be used. Furthermore, it is not always possible to accurately predict peak demand. Hence, delays  in purchasing additional resources may introduce delays in satisfying the applications.  Similarly, using resources of the same provider at a distant location can increase latency. Creating shared resources among  co-located (or physically adjacent)   edge clouds is a promising approach as it accounts for the aforementioned shortcomings of buying more resources to meet peak demands or allocating resources at a distant location. However,  resources of  co-located mobile edge clouds belong to different service providers, each of which  may have a particular objective to optimize such as security, throughput, latency, etc.  when allocating and utilizing its resources. Therefore creating a resource pool and then allocating these resources requires taking different service provider objectives into account, which results in a multi-objective optimization \cite{cho2017survey} (MOO) problem.

\par In this paper, we attempt to address these issues. Specifically,  we consider a setting where multiple edge clouds reside at the network edge  belonging to different service providers. Each
MEC has a specific set and amount of resources and a number of applications
affiliated with the MEC demand for resources. We consider
 a strategy where all edge clouds initially allocate resources
 to their native applications. If an edge cloud satisfies its own
 applications and has free resources, it can share them with
 other edge clouds that might need them. This is because
 satisfying native applications helps the MEC service provider
 to retain its customers. We present a resource sharing and allocation framework based on \emph{cooperative game theory} 
(CGT), in
 which different edge clouds form
 a coalition to share their resources and satisfy the requests of different applications.
 Our CGT-based framework considers the fact that different
edge clouds may have different objectives, which makes the
 traditional single-objective optimization framework inapplicable to the MEC settings.



\subsection{Summary of the contributions}
The main contributions of this paper are:
\begin{enumerate}
	\item 
	We present a novel multi-objective resource-sharing
	problem formulation that considers both
	service provider utilities and application request satisfaction
	in mobile edge clouds. 
	 We propose a cooperative game theory based framework to solve the formulated problem. We model each edge cloud service provider as a player in our game and  model  resource sharing and allocation problem  as a \emph{canonical} game with non-transferable utility (NTU). 
	 We prove that the game is super-additive and convex, hence the \emph{core} is non-empty. Therefore, the grand coalition of all edge clouds is stable.  
\item We   propose  a distributed \emph{Game-theoretic Pareto Optimal Allocation} (GPOA) algorithm that provides an allocation from the core based on a strategy where all MECs first allocate resources to their native applications, and then MECs with resource surplus share their resources with those that have a resource deficit. 
\item To reduce resource fragmentation\footnote{We define resource fragmentation as splitting of the provisioned resources to applications of any player $n$ across different MECs. The higher the number of MECs that provide resources to a particular application, the higher will be the resource fragmentation. Ideally, for the applications of a resource deficit MEC, only one MEC with resource surplus should provide the resources. } due to resource sharing among MECs, we also propose  \emph{Polyandrous Polygamous Matching based Pareto Optimal Allocation} (PPMPOA) algorithm that aims to match resource deficit MECs with resource surplus MECs. 
Using evaluations we show that PPMPOA:
\begin{itemize}
	\item Provides allocation from the core.
	\item Results in a \emph{stable} matching \cite{maschler2013}.
	\item Reduces resource fragmentation. 
\end{itemize}
\item We also prove that the proposed  algorithms, GPOA and PPMPOA,  enforce \emph{truth-telling}, i.e., no MEC service provider has the incentive to misreport its capacity and native application requests. 

	
	\item We evaluate the performance of our proposed framework
	by extensive simulations. We verify the game
	theoretic properties of our proposed approach by showing
	that the game is super-additive and the core is nonempty,
	i.e., our obtained solutions are from the core
	 and satisfy properties such as \emph{individual rationality}, \emph{group rationality}, and \emph{stability}.  We also  show that the resource sharing and allocation mechanism  improves the utilities of  game players. Furthermore, our framework also improves application (user) satisfaction.
	 
\end{enumerate}
The rest of the paper is  structured as follows. 
Section~\ref{sec:prelim} provides a primer on  MOO,  CGT, and  the core. 
 Section~\ref{sec:sysmodel} describes our system model and presents the resource sharing/allocation optimization problem.  Section \ref{sec:opt_problem} discusses the game theoretic solution  
and Section~\ref{sec:exp_results} presents our experimental results.  Section \ref{sec:related} presents a review of relevant literature, while Section~\ref{sec:conclusion} concludes the paper.  


\section{Preliminaries}\label{sec:prelim}
In this section, we give a brief overview of Multi-Objective Optimization (MOO), cooperative game theory, and the core.

\subsection{Multi-Objective Optimization}
 Multi-Objective Optimization (MOO) identifies a vector $\boldsymbol{x}^*=[x_1^*,x_2^*,\cdots ,x_t^*]^T$ that optimizes a vector function
	\begin{equation}\label{eq:vectorobjective}
	\centering 
	\bar{f}(\boldsymbol{x})=[f_1(\boldsymbol{x}),f_2(\boldsymbol{x}),\cdots,f_N(\boldsymbol{x})]^T
	\end{equation}
	such that 
	\begin{align}\label{eq:constraints}
	\centering 
	& g_i(\boldsymbol{x})\geq 0, \; i=1,2,\cdots,m, \\
	&h_i(\boldsymbol{x})=0\;\; i=1,2,\cdots,p. \nonumber
	\end{align}
	where $\boldsymbol{x}=[x_1,x_2,\cdots ,x_t]^T$ is a vector of $t$ decision variables and the feasible set is denoted by $F$.
The fundamental difference between a single objective optimization (SOO) and MOO is that MOO involves a vector of objective functions rather than a single objective function. Hence, the resulting solution is not a single point, but 
 a \emph{frontier} of solutions known as \emph{Pareto frontier} or \emph{Pareto boundary} (see \cite{cho2017survey} for details). Some basic definitions related to MOO are: 
\begin{definition}{Pareto Improvement or Pareto Dominated Solution:}
	Given an initial allocation, if we can achieve a different 	allocation that improves at least one individual function  without degrading any other, then the initial allocation is called \emph{Pareto improvement}.
\end{definition}

\begin{definition}{Pareto Optimality:}
	For any minimization problem,  $\boldsymbol{x}^*$ is $Pareto\; Optimal$ if the following holds for every $\boldsymbol{x} \in F$,
	\begin{align}\label{eq:paretoptimality}
	\bar{f}(\boldsymbol{x}^*)\leq\bar{f}(\boldsymbol{x}). 
	\end{align}
	where 
	$\bar{f}(\boldsymbol{x}^*)=[f_1(\boldsymbol{x}^*),f_2(\boldsymbol{x}^*),\cdots,f_N(\boldsymbol{x}^*)]^T$.
\end{definition}

\begin{figure}
\centering
\includegraphics[width=0.4\textwidth]{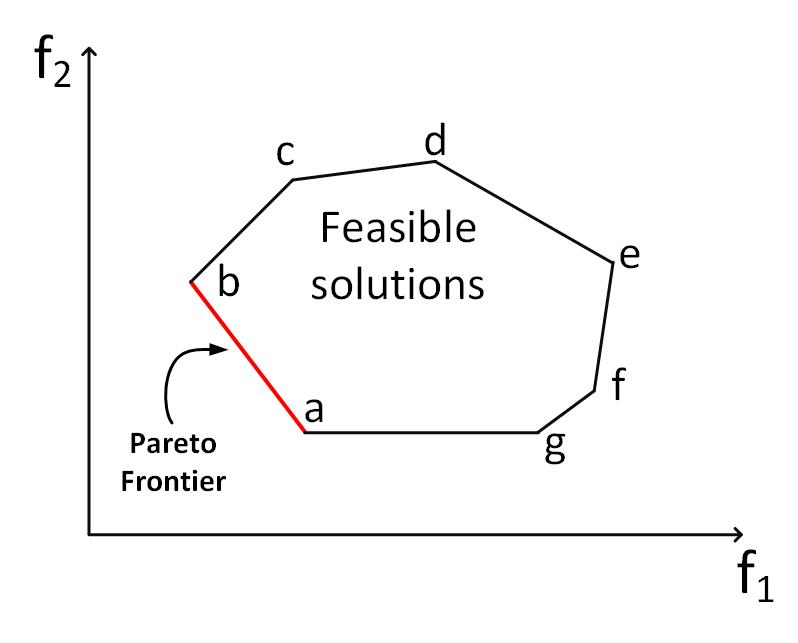}
	\vspace{-0.20in}
\caption{MOO with two objective functions}
\protect\label{fig:pareto}
\end{figure}

Figure~\ref{fig:pareto} shows a  MOO problem with two objective functions $f_1$ and $f_2$.  The boundary $\overleftrightarrow{ab}$ is the Pareto frontier  as it consists of all the Pareto optimal solutions. 

\subsection{Cooperative Game Theory}
Cooperative game theory provides a set of analytical tools that assist in understanding the behavior of rational players in a cooperative setting \cite{han2012game}. Players can have agreements among themselves that affect  strategies as well as utilities  obtained by the game players. Coalition game, a type of cooperative game, deals with the formation of  coalitions, namely groups of two or more cooperating players. Formally,

\begin{definition}{Coalition Game \cite{han2012game}:}\label{def:charac}
Any coalition game with \emph{non-transferable utility} (discussed below)  can be represented by the pair $(\mathcal{N},\mathcal{V})$ where $\mathcal{N}$ is the set of players that play the game, while $\mathcal{V}$ is a set of payoff vectors such that \cite{roger1991game}:

\begin{enumerate}
	\item $\mathcal{V}(S), \forall S \subseteq \mathcal{N}$ should be a closed and convex subset of $\mathbb{R}^{|\mathcal{N}|}$.
	\item $\mathcal{V}(S)$ should be comprehensive, i.e., if we are given payoffs $\mathbf{x} \in \mathcal{V}(S)$ and $\mathbf{y} \in \mathbb{R}^{|S|}$ where $\mathbf{y}\leq \mathbf{x}$, then $\boldsymbol{y} \in \mathcal{V}(S)$. In other words, if the members of coalition $S$ can achieve a payoff allocation $\mathbf{x}$, then the players can change their strategies to achieve an allocation $\mathbf{y}$ where $\mathbf{y}\leq \mathbf{x}$. 
	\item The set $\{\mathbf{x}| \mathbf{x} \in \mathcal{V}(S) \;$ and $\; x_n \geq z_n, \forall n \in S \}$, with $z_n = \max\{y_n|\mathbf{y} \in \mathcal{V}(\{n\}) \}\le \infty \; \forall n \in \mathcal{N}$ should be a bounded subset of $\mathbb{R}^{|S|}$. In other words, the set of vectors in $\mathcal{V}(S)$ for a coalition $S$ where the coalition members receives a payoff at least as good as working alone (non-cooperatively) is a bounded set. 
\end{enumerate}

\end{definition}

\begin{definition}{Utility:}
The payoff or gain a player receives in the game is known as utility.  
\end{definition}

\begin{definition}{Value of a coalition:}
The sum of utilities of the coalition members is known as the value ${v}$ of a coalition.
\end{definition}

\begin{definition}{Non-Transferable Utility (NTU)\cite{han2012game}:}
	If the total utility of any coalition cannot be  assigned a single real number  or if there is a rigid restriction on  utility distribution among players, then the game has a \emph{non-transferable} utility. In other words, the payoff each player receives and the value of coalition depend on the game strategy. Utility cannot be transfered freely among the players. 
	
\end{definition}

\begin{definition}{Characteristic function:}
	The characteristic function for any coalition game with NTU is a function that assigns  a set of payoff vectors, $\mathcal{V}(S) \subseteq \mathbb{R}^{|S|}$, to game players where each element of the payoff vector $x_n$ represents a payoff that player $n \in S$ obtains  given a strategy selected by the player $n$ in the coalition $S$.   
\end{definition}

\begin{definition}{Characteristic Form Coalition Games \cite{han2012game}:}
	A coalition game is said to be of \emph{characteristic} form, if the value of coalition $S \subseteq \mathcal{N}$ depends only on  the members of the coalition. (This means that players that are not part of the game don't impact the value of coalition.)
\end{definition}

\begin{definition}{Superadditivity of NTU games:}
A canonical game with NTU is said to be superadditive if the following property is satisfied. 
\begin{align}\label{eq:superadditivity}
  & \{x \in \mathbb{R}^{S_1\cup S_2}|(x_n)_{n \in S_1} \in \mathcal{V}(S_1), (x_m)_{m \in S_2}\nonumber \\
& \in \mathcal{V}(S_2)   \}   \subset \mathcal{V}(S_1 \cup S_2) \;  \forall S_1 \subset \mathcal{N}, S_2 \subset \mathcal{N}, S_1 \cap S_2 =\emptyset. 
\end{align}
where $x$ is a payoff allocation for the coalition  $S_1 \cup S_2$. 

\end{definition}

That is,  if any two disjoint coalitions $S_1$ and $S_2$ form a large coalition $S_1 \cup S_2$, then the coalition $S_1 \cup S_2$ can always give its members the payoff that they would have received in the disjoint coalition  $S_1$ and $S_2$. It is worth mentioning that $\mathcal{V}$ is the set of payoff vectors while ${v}$ is the sum of payoffs  for all coalition players. 
\begin{definition}{Canonical Game:}
A coalition game is canonical if it is superadditive and in characteristic form.
\end{definition}
The \emph{core} is a  widely used concept for dealing with canonical games as discussed below. 

\subsection{Core}
To help us explain the core, we first define some terms   \cite{han2012game}.

\begin{definition}{Group Rational:}\label{def:groupRational}
	A payoff vector $\textbf{x}\in \mathbb{R}^\mathcal{N}$ for distributing the coalition value $v(\mathcal{N})$ to all players is group-rational if $\sum_{n \in \mathcal{N}}x_n=v(\mathcal{N})$.
\end{definition}

\begin{definition}{Individual Rational:}\label{def:indRational}
	A payoff vector $\textbf{x}\in \mathbb{R}^\mathcal{N}$ is individually rational if every player can obtain a payoff no less than acting alone (not in a coalition), i.e., $x_n \geq v(\{n\}), \forall n\in \mathcal{N}$.
\end{definition}

\begin{definition}{Imputation:}
	A payoff vector that is both individual and group rational is  an imputation. 
\end{definition}

\begin{definition}{Grand Coalition:}
The coalition formed by all game players $\mathcal{N}$ is the grand coalition.
\end{definition}

Based on the above definitions, we can  define the core of an NTU canonical coalition game as:

\begin{definition}{Core\cite{han2012game}:}
	For any NTU canonical game $(\mathcal{N}, \mathcal{V})$, the core $C_{NTU}$ is the set of imputations in which no coalition $S\subset\mathcal{N}$ has any incentive to reject the proposed payoff allocation and deviate from the grand coalition to form the smaller  coalition $S$ instead. Mathematically, this is 
	\begin{align}\label{eq:coreTU}
	\mathcal{C}_{NTU}&= \{ \mathbf{x} \in \mathcal{V}(\mathcal{N}) | \forall S \subset \mathcal{N}, \nexists \mathbf{y} \in \mathcal{V}(S) \; such\; that \; y_n > x_n, \nonumber \\
	&  \forall n \in S    \}.
	\end{align}
\end{definition}
\begin{remark}\label{rem:paretocore}
	Any payoff allocation from the core is Pareto-optimal as evident from definition of the core. Furthermore, the grand coalition formed is stable, i.e., no two players will have an incentive to leave the grand coalition to form a smaller coalition. 
\end{remark}
However, the core is not always guaranteed to be non-empty. If the core is non-empty, it will be a convex set \cite{shapley1971cores}, whereas a convex set consists of infinite convex combinations of the points within the set \cite{boyd2004convex}. Therefore, a non-empty core  is large and finding a payoff allocation (vector) in the core is not easy.    
\begin{definition}{User/Application Request Satisfaction:}\label{def:rs}
	The extent to which a request for resources is satisfied.  Mathematically, if $r$ is the requested amount and $x$ is the provisioned amount, then application request satisfaction is given by $x/r$. 
\end{definition}

\section{System Model}\label{sec:sysmodel}

 \begin{figure}
	\centering
	\includegraphics[width=0.48\textwidth]{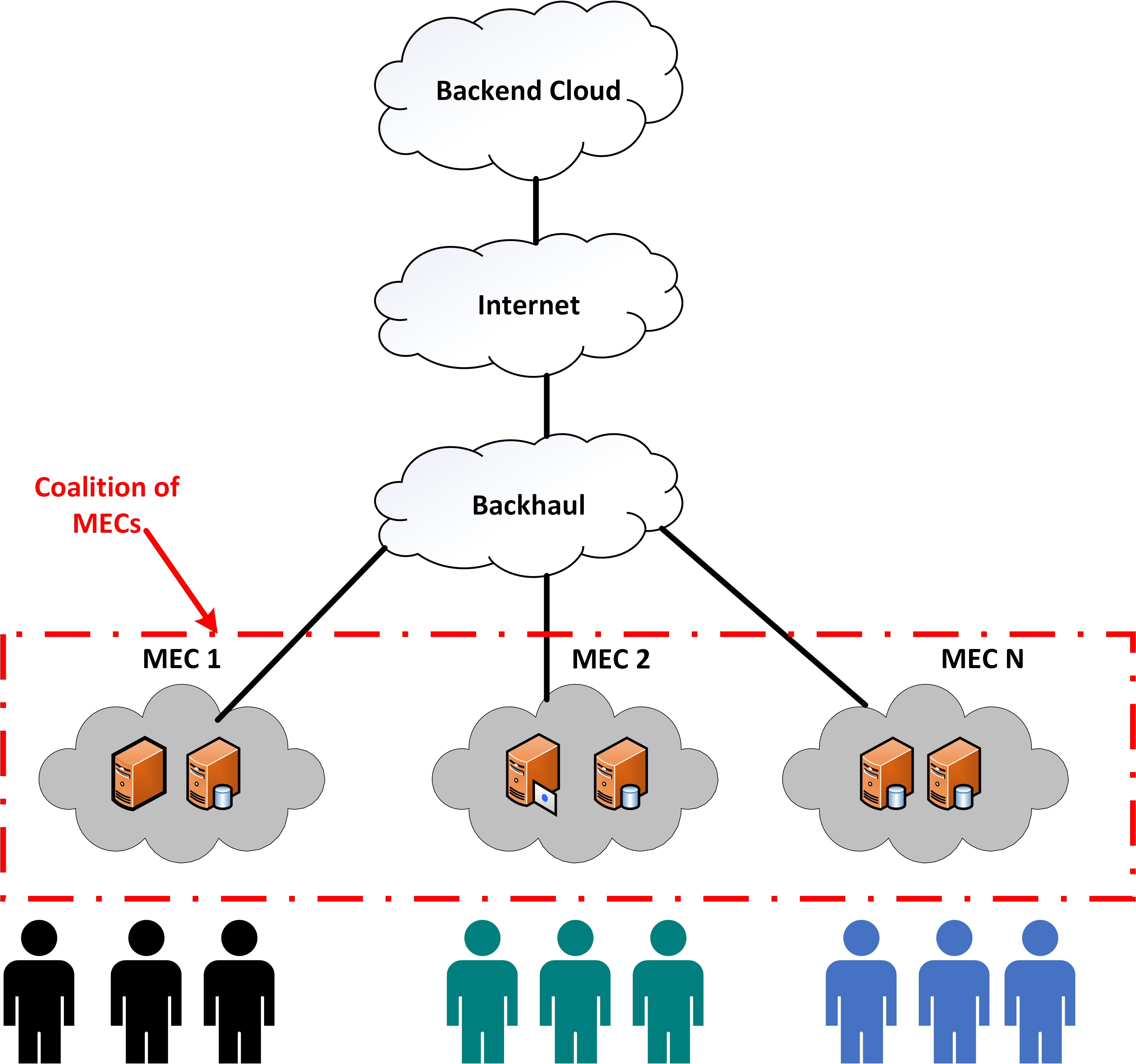}
	\caption{Our system model with multiple MEC providers}
	\protect\label{fig:system_model}
	\vspace{-0.2in}
\end{figure}

Let $\mathcal{N}=\{1,2, \cdots, N\}$ be the set of   edge cloud-service providers. 
We assume that each service provider has a set of $\mathcal{K}=\{1,2,\cdots, K\}$ different types of resources such as communication, computation and storage resources.  Let the available resources at service provider  $n$ be denoted by $C_{n} = \{ C_{n,1},\cdots, C_{n,K}\}$, where  $C_{n,k}$ is the amount of type $k$ resources available at  service provider $n$. 
  Each service provider $n$ has a set of  native (local) applications $\mathcal{M}_n= \{1,2,\cdots, M_n\}$. 
The set of all applications that ask for resources from all  service providers (i.e., the grand coalition of service providers) is given by $\mathcal{M}=\mathcal{M}_1\cup\mathcal{M}_2\cdots\cup \mathcal{M}_N , \;$ where $\mathcal{M}_i \cap \mathcal{M}_j=\emptyset, \; \forall i \neq j,$ i.e., each provider has an entirely different set of native applications. 
Every provider  $n \in \mathcal{N}$ has a request (requirement) matrix $R_{n}$.
\begin{equation}
\label{eq:R_Req}
R_{n}=\Biggl[\begin{smallmatrix}
\mathbf{r_{n}^1}\\ 
.\\ 
.\\ 
.\\
\mathbf{r_{n}^{{M}_n}}
\end{smallmatrix}\Biggr] = \Biggl[\begin{smallmatrix}
r_{n,1}^{1} & \cdots  &\cdots   & r^{1}_{n,{K}} \\ 
. & . &.  &. \\ 
. & . &.  &. \\
. & . &.  &. \\
r_{n,1}^{{M}_n}&\cdots  &\cdots   & r_{n,K}^{{M}_n}
\end{smallmatrix}\Biggr].
\end{equation}
where the $j^{th}$ row corresponds to  application $j$, while  columns represent different resources. That is,  $r_{n,k}^{j}$ is the amount of resource $k$ that application $j \in \mathcal{M}_n$ requests from provider $n$. When the service provider  works alone\footnote{Not borrowing from or renting out its resources to any other service provider.},  its objective is to maximize its utility by allocating  resources to its native applications and improving application satisfaction. A service provider $n$ earns a non-negative \emph{monotonically non-decreasing} utility $u_n^j (x_{n,k}^j)$ by making  allocation decision $x_{n,k}^j$, i.e., allocating $x_{n,k}^j$ amount of resource $k$ to application $j\in \mathcal{M}_n$, where  $\mathbf{x}_n^j= [x_{n,1}^j, x_{n,2}^j,\cdots, x_{n,K}^j]^T$. 
 Table \ref{tab:not} summarizes the notations used throughout the paper. Below we present the optimization formulation for a single service provider. 


\begin{table}[]
	\centering
	\caption{List of notations used throughout the paper}
	\begin{tabular}{|p{1cm}|p{6.9cm}|}
		\hline
		\textbf{Notation}	& \textbf{Description} \\ \hline
		$\mathcal{{N}}, N, n$	& Set, number and index of game players (service providers)  \\ \hline
		$\mathcal{K}, K, k$	& Set, number and index of resources \\ \hline
		$\mathcal{M}, M, j$	& Set, number and index of applications \\ \hline
		$\mathcal{M}_{n}$	& Set of native applications at player (provider) $n$  \\ \hline
		${S}$	& A coalition of game players (providers), where $S \subseteq \mathcal{{N}}$  \\ \hline
		$\mathcal{V}$	& Set of payoff vectors    \\ \hline
		${v}(S)$	& Value of coalition $S$    \\ \hline
		$\mathcal{G}_1$	& Set of players (providers) with resource deficit    \\ \hline
		$\mathcal{G}_2$	& Set of players (providers) with resource surplus   \\ \hline
		$C$	& Capacity vector of all game players (providers) \\ \hline
		$C_n$	& Capacity vector of player  (provider)$n$ \\ \hline
		$C_{n,k}$	& Capacity of resource $k$ at player (provider) $n$ \\ \hline
		$R_{n}$	& Request matrix at player (provider) $n$ \\ \hline
		$r_{n,k}^{j}$	& Request of application $j$ for resource $k$ from player (provider) $n$ \\ \hline
		$x_{n,k}^{j}$	& Allocation decision of resource $k$ for application $j$ at player (provider) $n$ \\ \hline
	$\mathbf{x}_{n}^{j}$	& Allocation decision vector for application $j$ at player (provider) $n$ when working alone \\ \hline

	$\mathbf{X}_n$ & Allocation decision for player (provider) $n$ in the  coalition \\ \hline 
	$\mathbf{X}$ & Allocation decision for the entire coalition  \\ \hline 
		$\mathcal{X}_n$ & Allocation decision for player (provider) $n$ in the  coalition excluding its native applications \\ \hline	
		$^{m}\mathbf{X}_n$ & Allocation decision for applications of player (provider) $m$ at player (provider) $n$ in the  coalition s \\ \hline		
			$u_{n}^j({x}_{n,k}^j)$	& Utility player (provider) $n$ earns  by allocating resource $k$ to application $j$ \\ \hline
		$u_{n}^j(\mathbf{x}_{n}^j)$	& Utility player (provider) $n$ earns  by allocating vector of  resources to application $j$ \\ \hline
		$GC$	& Grand coalition\\ \hline
		$RS$ & Request satisfiability\\ \hline
		$TU$ & Transferable utility\\ \hline
		$NTU$ & Non-transferable utility\\ \hline
		$CGT$ & Cooperative game theory\\ \hline
		$MOO$ & Multi-objective optimization\\ \hline
		$SOO$ & Single-objective optimization\\ \hline
		$MEC$ & Mobile edge cloud\\ \hline
	\end{tabular}
	\label{tab:not}
\end{table}
\subsection{Problem Formulation for Single Service Provider} \label{sec:prob}
 We  first present the resource allocation problem for a single, stand-alone provider (i.e., no resource sharing with other providers). 
For each provider  $n \in \mathcal{N}$, the allocation decisions  consist of the vectors $\mathbf{x}_n^1, \cdots, \mathbf{x}_n^{M_n}$. The optimization problem is: 
\begin{subequations}\label{eq:opt_single}
	\begin{align}
	\max_{\mathbf{x}_n^1, \cdots, \mathbf{x}_n^{M_n}}\quad& \sum_{j \in \mathcal{M}_n}f_n^j(\mathbf{x}_n^j), \label{eq:objsingle}\displaybreak[0]\\
	\text{s.t.}\quad & \sum_{j} x_{n,k}^{j}\leq C_{n,k}, \quad \forall k \in \mathcal{K} \label{eq:singlefirst} \displaybreak[0]\\
	& x_{n,k}^{j} \leq r^{j}_{n,k}, \quad \forall\; j\in \mathcal{M}_n, k \in \mathcal{K}, \label{eq:singlesecond} \displaybreak[1]\\
	&  x_{n,k}^{j} \geq 0, \quad \forall\; j\in \mathcal{M}_n, k \in \mathcal{K}. \label{eq:singlethird} \displaybreak[2]
	\end{align}
\end{subequations}
 The objective function $f_n^j(\mathbf{x}_n^j)$  is required to be a monotonic, non-negative and non-decreasing function. For illustration purposes in this paper, we assume that:
\begin{align}
f_n^j(\mathbf{x}_n^j) =w_{1}^ju_n^j(\mathbf{x}_n^j)+\sum_{k \in \mathcal{K}} \frac{x_{n,k}^{j}}{r_{n,k}^{j}}. \nonumber 
\end{align}
where the  term $ \sum_{k \in \mathcal{K}} \frac{x_{n,k}^{j}}{r_{n,k}^{j}}$  captures the request satisfaction (see Definition \ref{def:rs}) and $u_n^j(\mathbf{x}_n^j)$ is the non-decreasing and non-negative utility the  service provider earns by providing the resources to applications. 
The weight $w_{1}^j$ is  used for \emph{scaling} so that both the terms are on the same scale. 
The first constraint in \eqref{eq:singlefirst} indicates that the allocated resources cannot exceed capacity. The second constraint  in \eqref{eq:singlesecond} indicates that the allocated resources should not exceed the required amounts.  The last constraint, \eqref{eq:singlethird}  requires the allocation to be non-negative.

The goal of this single-objective optimization problem for every provider, as mentioned earlier,  is   to  maximize its utility by allocating available resources to its native applications and improving application satisfaction. We note that this single-provider formulation does not consider the following. For example, it is possible that
 a service provider $n$ may earn a larger utility by providing
 its resources to applications of other service providers. Furthermore, provider(s)
 may not have sufficient resources to satisfy requests of all its
 native applications, while other 
 providers  may have resource
 surpluses that can be ``rented" and utilized by other providers. In this work, we focus on strategies where each service
provider allocates resources to its native applications and
then shares any remaining  resources with other service
providers who are not able to fully satisfy their native applications.
These aspects are considered in the following formulation.

\subsection{Problem Formulation for Multiple Service Providers}
Let us assume that a mobile edge service provider $n$ does not have sufficient resources to satisfy all its native applications. Suppose that service provider $m$  has a resource surplus, after allocating its resources to its native applications that can  be shared with $n$. Allowing resource sharing among such providers can
 improve resource utilization and application satisfaction.
By sharing and allocating its type $k$ resources  to application $l$, service provider $m$ earns a non-decreasing and non-negative net utility $H_m^l(x_{m,k}^l)$, which in this paper is assumed to be 
\begin{align}\label{eq:1}
H_m^l( x_{m,k}^l)=w_1^l\Big(u_m^l (z_k^l+x_{m,k}^l)-u_m^l (z_k^l)  -D_m^l(x_{m,k}^l)\Big)\nonumber \\+ \Big(\frac{x_{m,k}^l}{r_{n,k}^l-z_k^l}\Big)^2, \forall l \in \mathcal{{M}}\backslash \mathcal{{M}}_m, n \in \mathcal{{N}}\backslash m.
\end{align} 
where $z_k^l= \sum_{n \in \mathcal{{N}}\backslash m}x_{n,k}^l$ is the amount of resource $k$ allocated already to application $l$ by providers other than $m$. $u_m^l (z_k^l+x_{m,k}^l)-u_m^l (z_{k}^l)$  captures the incremental increase in utility of provider $m$ due to the allocation of additional resource $x_{m,k}^l$.  $D_m^j(x_{m,k}^j)$ represents the communication cost of serving application $j \in  \mathcal{{M}}_n$ by using resource of type $k$ at  provider $m$ rather than its native service provider $n$. We assume that the communication cost is a positive monotonic non-decreasing function. The term $(\frac{x_{m,k}^l}{r_{n,k}^l-z_k^l})^2$, which takes  a value in  $[0, 1]$, is used to reduce resource fragmentation by providing a higher utility  for providing more resources at a particular provider.   

To consider each provider and its native applications, let  $F_n^{j}(\mathbf{x}_n^j)$ represent the utility  service provider $n$ earns by allocating resources to its native applications along with the increase in request satisfaction due to provision of resources to its applications at different service providers including itself.  
   We assume:
\begin{align}\label{eq:2}
F_n^{j}(\mathbf{x}_n^j) =w_1^ju_n^j(\mathbf{x}_n^j)+\sum_{k \in \mathcal{K}}\frac{\sum_{m \in \mathcal{{N}}}x_{m,k}^{j}}{r_{n,k}^{j}}.
\end{align}
Note that $F_n^{j}$ is  monotonic non-decreasing and non-negative. 
 Furthermore, we implicitly assume that $ u_n^j (\mathbf{x})=u_m^j (\mathbf{x}), \; \mathbf{x}\geq 0$.  The latter assumption reflects that different providers  earn the same utility when allocating the same amount of resources of the same type to a given application, either locally or remotely.  Furthermore, 

\begin{align}
\mathbf{x}^j= [\sum_{n \in \mathcal{{N}}}x_{n,1}^j, \sum_{n \in \mathcal{{N}}}x_{n,2}^j,\cdots, \sum_{n \in \mathcal{{N}}}x_{n,K}^j]^T. \nonumber
\end{align}
That is, the total resources allocated to any application $j \in \mathcal{M}$ is the sum of resource allocated to it across all service providers.  The resource sharing and allocation framework, based on resource requests and  capacities of service providers, has to make allocations that optimize the utilities of all the service providers $n \in \mathcal{N}$ while satisfying user requests as well.  The allocation decision is given by
$\mathbf{X}=\{\mathbf{X}_1, \mathbf{X}_2, \cdots, \mathbf{X}_N\}$, where $\mathbf{X}_n,\; \forall n \in \mathcal{N}$ is given by: 

\begin{equation}
\label{eq:X}
\mathbf{X}_{n}=\Biggl[\begin{smallmatrix}
\mathbf{x}^{1}_n\\ 
.\\ 
.\\ 
.\\
\mathbf{x}^{|\mathcal{{M}}|}_n
\end{smallmatrix}\Biggr] = \Biggl[\begin{smallmatrix}
x_{n,1}^{1} & \cdots  &\cdots   & x^{1}_{n,{K}} \\ 
. & . &.  &. \\ 
. & . &.  &. \\
. & . &.  &. \\
x_{n,1}^{|\mathcal{{M}}|}&\cdots  &\cdots   & x_{n,K}^{|\mathcal{{M}}|}
\end{smallmatrix}\Biggr].
\end{equation}

When service providers share resources,  each service provider aims to  maximize the sum of utilities obtained by a) allocating resources to its native applications;  b) allocating resources (if available) to  applications of other service providers;  c) improving the request satisfaction of its applications by using its
 own resources or borrowing from other service providers (when
needed and possible); and d) reducing resource fragmentation. 
In the resource sharing case, each service provider
allocates its resources by solving the following optimization problem:

\begin{subequations}\label{eq:opt_higher}
	\begin{align}
	\max_{\mathbf{X}_n}\quad&\sum_{j \in \mathcal{M}_n} F_n^{j}(\mathbf{x}_n^j)+\sum_{l \in \{\mathcal{M}\backslash \mathcal{M}_n\}} H_n^l (\mathbf{x}_n^l),\quad \forall n \in \mathcal{{N}}, \label{eq:obj}\\
	\text{s.t.}\quad 
	& \sum_{j} x_{n,k}^{j}\leq C_{n,k}, \quad \forall\; j\in \mathcal{M}, k \in \mathcal{K}, n \in \mathcal{N}, \label{eq:obj1} \displaybreak[0]\\
	& \sum_{m \in \mathcal{N}}x_{m,k}^{j} \leq r_{n,k}^{j}, \quad \forall\; j\in \mathcal{M}, k \in \mathcal{K}, n \in \mathcal{N},  \label{eq:obj2}\displaybreak[1]\\
	&  x_{n,k}^{j} \geq 0, \quad \forall\; j\in \mathcal{M}, k \in \mathcal{K}, n \in \mathcal{N}, \label{eq:obj3} \displaybreak[2] \\
	&  u_n^l(z_k^l+ {x}_{n,k}^l) - u_n^j({z}_{k}^l)\geq D_n^l({x}_{n,k}^l),\quad \nonumber \\ & \forall l \in \mathcal{M}\backslash\mathcal{{M}}_n, k \in \mathcal{K}, n \in \mathcal{N} \label{eq:obj4}, 
	\end{align}
\end{subequations}
where  $F_n^j(\mathbf{x}_n^j)$ and $H_n^l(\mathbf{x}_n^l)$ are given in \eqref{eq:1} and \eqref{eq:2}. 
Constraints \eqref{eq:obj1}-\eqref{eq:obj3} are similar to those for the formulation in \eqref{eq:opt_single}, except that \eqref{eq:obj4} indicates that the utility earned by sharing resources must be higher than the communication cost.

As the utility $u_n^j,\; \forall n\in \mathcal{{N}} $ can differ for each application $j$, \eqref{eq:opt_higher} for all providers $n$ represents   a multi-objective optimization problem, which will be solved by a game-theoretic approach as follows.

\section{Game Theoretic Solution}\label{sec:opt_problem}

In this section, we first present general properties of our game and show that the resource-sharing problem for multiple providers  can be modeled as a canonical cooperative game with non-transferable utility. We then show that the core is non-empty, by proving that our canonical game is \emph{cardinally convex}\cite{sharkey1981convex}. Finally, we propose algorithms in Section \ref{sec:algorithms} to obtain allocations from the core.

\subsection{General Game Properties}\label{sec:generalgame}

We model each service provider as a player in our game to obtain the optimal resource sharing/allocation decision. 
Let $\mathcal{N}$ be the set of players that  play the resource sharing and allocation game. The \emph{value of coalition} for the game players $S \subseteq \mathcal{N}$ is given by, \eqref{eq:payoff_function}, where $\mathcal{F}_S$ is the feasible set for resource sharing and allocation given by, \eqref{eq:obj1}-\eqref{eq:obj3} and $\mathcal{M}_S$ is the set of applications in $S$, 
\vspace{-0.1in}
\begin{align} \label{eq:payoff_function}
v({S})&=\sum_{{\substack{{n \in S}\\ {\mathcal{X} \in \mathcal{F}_S}}}}  
\bigg(\sum_{j \in \mathcal{M}_n}\big(w_1^ju_n^j(\mathbf{x}_n^j)+\sum_{k \in \mathcal{K}}\frac{\sum_{m \in {{S}}}x_{m,k}^{j}}{r_{n,k}^{j}}\big) +\nonumber \\
&  \Big(\sum_{l \in \{\mathcal{M}\backslash \mathcal{M}_n\}} w_1^l\big(u_n^l(\mathbf{z}^l+\mathbf{x}_n^l)-u_n^l(\mathbf{z}^l)-D_n^l(\mathbf{x}_n^l)\big)+\nonumber \\
&\sum_{k \in \mathcal{K}}\Big(\frac{x_{n,k}^l}{r_{p(l),k}^l-z_k^l}\Big)^2\Big)\bigg).
\end{align}
Here $p(l)$ indicates the native player to which application $l$ belongs. Our strategy  to achieve the aforementioned value of coalition is that each player first allocates resources to its own applications and then shares its remaining resources (if any) with  other players. 

\begin{remark}
	
A	resource allocation and sharing problem (with multiple objectives) for the aforementioned system model  can be modeled as a canonical cooperative game with NTU. 
\end{remark}
This is so because the stated problem satisfies the following two
conditions.

\begin{itemize}
	\item \textbf{Characteristic form of payoff:} As the utility function in our  formulated resource sharing and allocation problem  only relies on  service providers that are part of the coalition, the game or payoff function is of characteristic form.
	\item \textbf{Superadditivity:} 
	Let $S_1, S_2 \subseteq \mathcal{N}$ where $S_1, S_2$ are non-empty and $S_1 \cap S_2 =\emptyset$. Hence, $S_1,\; S_2 \subset (S_1 \cup S_2)$. Superadditivity  follows from the monotonic utility functions. 
	
\end{itemize}

\begin{definition}[Cardinally Convex Games\cite{sharkey1981convex}]
	An NTU game is said to be cardinally  convex if $\forall\; S_1,\; S_2 \subseteq \mathcal{{N}}$:
	\begin{equation}\label{eq:convexgames}
\mathcal{V}(S_1)\cup \mathcal{V}(S_2) \subseteq \mathcal{V}(S_1 \cup S_2) \cup\mathcal{V}(S_1 \cap S_2). 
\end{equation} 	
\end{definition}
whereas $\mathcal{V}$ is given in  definition \ref{def:charac}.
\begin{theorem}\label{thm:convex}
	Our canonical game is cardinally convex. 
\end{theorem}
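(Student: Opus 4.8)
The plan is to establish the statement in the Minkowski-sum form that the notion of cardinal convexity from \cite{sharkey1981convex} actually requires, namely $\mathcal{V}(S_1)+\mathcal{V}(S_2)\subseteq \mathcal{V}(S_1\cup S_2)+\mathcal{V}(S_1\cap S_2)$ for all $S_1,S_2\subseteq\mathcal{N}$, which is the set-valued analogue of the scalar supermodularity $v(S_1)+v(S_2)\le v(S_1\cup S_2)+v(S_1\cap S_2)$. Write $T=S_1\cap S_2$, $A=S_1\setminus S_2$, $B=S_2\setminus S_1$, and read each Minkowski sum coordinatewise over the union of the two index sets (coordinates present in only one summand are copied, shared coordinates are added). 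Because every $\mathcal{V}(S)$ is comprehensive (condition~2 of Definition~\ref{def:charac}), the inclusion is equivalent to the following constructive claim, and reducing the theorem to it is the first step: for every achievable pair $u=(u_n)_{n\in S_1}\in\mathcal{V}(S_1)$ and $w=(w_n)_{n\in S_2}\in\mathcal{V}(S_2)$ there exist $a\in\mathcal{V}(S_1\cup S_2)$ and $b\in\mathcal{V}(T)$ with $a_n\ge u_n$ for $n\in A$, $a_n\ge w_n$ for $n\in B$, and $a_n+b_n\ge u_n+w_n$ for $n\in T$.

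Two structural facts drive the construction. First is feasible-set monotonicity: if $S\subseteq S'$, then any $\mathcal{X}\in\mathcal{F}_S$, extended by zeros on the players and applications in $S'\setminus S$, lies in $\mathcal{F}_{S'}$, and each $n\in S$ keeps exactly its $S$-payoff, since the coalition-dependent sums $\sum_{m}x_{m,k}^{j}$ and $z_k^l$ in \eqref{eq:payoff_function} are unchanged by adjoining zero terms; hence $\mathcal{V}(S)\subseteq\mathcal{V}(S')$ after comprehensive extension. Second, each per-player functional $F_n^{j}$ and $H_n^{l}$ is non-negative and monotone non-decreasing in the allocation, and the coalition payoff is their sum. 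I would then fix allocations $\mathcal{X}^1,\mathcal{X}^2$ realizing $u,w$; by the serve-native-first-then-share strategy underlying \eqref{eq:payoff_function}, in each of them every player of $T$ first meets its own applications and only its surplus capacity is shared outward. I would build $a$ on $S_1\cup S_2$ by pooling all capacities, letting every player serve its natives first (reproducing the native-application terms of both $u$ and $w$) and then routing $A$'s shared allocations as in $\mathcal{X}^1$ and $B$'s as in $\mathcal{X}^2$, and build $b$ on $T$ from the residual surplus of $T$'s capacities. Superadditivity, already established, covers the disjoint parts $A$ and $B$; convexity is precisely the claim that the overlap at $T$ is absorbed without loss.

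The main obstacle is the shared-resource conflict at $T$: serving $A$'s deficit applications as in $\mathcal{X}^1$ and $B$'s as in $\mathcal{X}^2$ may both draw on the surplus capacity of the same players of $T$, so the two sharing patterns cannot simply be overlaid inside $a$ without violating the capacity constraint \eqref{eq:obj1}. The crux is therefore to show that the enlarged pool available to $S_1\cup S_2$, together with the residual captured by $b$, jointly covers both demands, i.e., that a player's marginal contribution of capacity and demand is non-decreasing in coalition size (increasing differences). I would derive this from the specific structure: native demand is always met first, so the native-application terms of $F_n^{j}$ coincide across $\mathcal{V}(S_1)$, $\mathcal{V}(S_2)$ and $\mathcal{V}(S_1\cup S_2)$ and cancel in the comparison, while the remaining terms (request satisfaction, the fragmentation term $(x_{n,k}^l/(r_{p(l),k}^l-z_k^l))^2$, and $u_n^l-D_n^l$) are non-negative and monotone, so directing surplus to additional applications never lowers an incumbent's payoff. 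The hardest step, and the one where the argument would fail without monotone, additive utilities, is the capacity-accounting inequality proving that the combined surplus demand of $A$ and $B$ on $T$'s shared capacity is feasibly met in $S_1\cup S_2$ once $b$ takes the residual; I expect to settle it by an exchange argument that reassigns $T$'s surplus between the $a$- and $b$-allocations until both the capacity constraints and the coverage inequalities $a_n+b_n\ge u_n+w_n$ hold simultaneously.
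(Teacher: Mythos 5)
Your attempt never actually closes: everything up to the final sentence is setup, and the single step that \emph{is} cardinal convexity in the Minkowski-sum sense --- the capacity-accounting at $T$ --- is deferred with ``I expect to settle it by an exchange argument.'' That step is not just unfinished; it appears unprovable in this model. Take $T=\{t\}$ a single player whose post-native surplus $c$ is fully consumed in $\mathcal{X}^1$ serving $A$'s applications (sharing gain $h_1$) and fully consumed in $\mathcal{X}^2$ serving $B$'s applications (gain $h_2$), with linear utilities as in \eqref{eq:utlinear}. In the union coalition, $t$ can allocate only one copy of $c$, so $a_t$ is at most its native payoff plus the best split of $c$ between the two application sets, which for additive monotone utilities is strictly below $\text{native}+h_1+h_2$. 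The vector $b\in\mathcal{V}(T)$ cannot make up the difference: coalition $T$ contains no applications of $A\cup B$, so by \eqref{eq:payoff_function} the payoff $b_t$ consists only of $t$'s native terms and contains no $H$-terms that could replicate $h_1$ or $h_2$. Hence the coverage inequality $a_t+b_t\ge u_t+w_t$ fails, and no reassignment of $T$'s surplus between the $a$- and $b$-allocations can fix it --- the obstruction is double-counting of $T$'s capacity across $S_1$ and $S_2$, not a suboptimal routing that an exchange argument could repair.

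For comparison, the paper's proof is two lines and proves something weaker than what you attempted: its stated definition \eqref{eq:convexgames} of cardinal convexity uses unions, $\mathcal{V}(S_1)\cup\mathcal{V}(S_2)\subseteq \mathcal{V}(S_1\cup S_2)\cup\mathcal{V}(S_1\cap S_2)$, which follows immediately from superadditivity \eqref{eq:superADD} together with the trivial inclusion $\mathcal{V}(S_1\cup S_2)\subseteq \mathcal{V}(S_1\cup S_2)\cup\mathcal{V}(S_1\cap S_2)$. You are right that Sharkey's cardinal convexity is the sum form, so your first paragraph correctly identifies a discrepancy between the notion the paper cites and the one it verifies; but your blind attempt establishes neither version --- you do not complete the sum form, and you never notice that the union form as written is immediate. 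If you want an argument matching the paper, prove \eqref{eq:convexgames} as stated; if you want Sharkey's sum form (the one that actually feeds the core-nonemptiness machinery), you must first confront the double-counting obstruction above, which for this game looks fatal rather than merely technical.
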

\begin{proof}
	Let us consider two coalitions $S_1$ and $S_2$ that are non-empty subsets of $\mathcal{N}$. Then from super-additivity of the game,    
	\begin{equation}
	\label{eq:superADD}
	\mathcal{V}(S_1)\cup\mathcal{V}(S_2) \subset \mathcal{V}(S_1 \cup S_2).
	\end{equation}
Furthermore, due to the non-negative and non-decreasing nature of the utility used,
		\begin{equation}
	\label{eq:superADD2}
 \mathcal{V}(S_1 \cup S_2) \subseteq \mathcal{V}(S_1 \cup S_2) \cup\mathcal{V}(S_1 \cap S_2).
	\end{equation}
The proof follows. 	
\end{proof}

\begin{remark}
	The core of any convex game $(\mathcal{N}, \mathcal{V})$ is non-empty\cite{sharkey1982cooperative} and large \cite{maschler2013}. 
\end{remark}

\begin{remark}\label{lemma:MOOGame}
	Our canonical cooperative game $(\mathcal{N}, \mathcal{V})$ with NTU can be used to obtain the Pareto-optimal solutions for the multi-objective resource  allocation for various service providers.
\end{remark}
The convexity of our game proves that the core is non-empty. However, obtaining an allocation from the core is challenging, which  we address in the following subsection.

\subsection{Proposed Algorithms}\label{sec:algorithms}

We first propose an efficient algorithm that provides  an allocation from the core. Then, we propose a second algorithm based on polygamous matching \cite{maschler2013} that obtains an allocation from the core subject to  resource fragmentation constraints  at the cost of higher computational complexity when compared with the first algorithm. Below we discuss these algorithms in detail. 

 
\subsubsection{Game-theoretic Pareto Optimal Allocation}
  We propose a  computationally efficient algorithm,  Algorithm \ref{algo:alg2}, that obtains a Pareto-optimal allocation by solving $N+|\mathcal{{G}}_2|$ optimization problems, where $\mathcal{G}_2$ represents the set of players with resource surplus. Inputs are available resources/capacities, application requests and  utilities of all game players. The algorithm outputs an allocation. In Step 1,  utilities of  players, the allocation decision,  and vectors $\boldsymbol{v}$, $\boldsymbol{A}$ and $\boldsymbol{B}$ are initialized. $\boldsymbol{v}$ stores the utility each service provider achieves when working alone.  $\boldsymbol{A}$  stores the utility of players in $\mathcal{G}_2$ whereas  $\boldsymbol{B}$ stores the utility increase due to request satisfaction of applications in resource deficit group $\mathcal{G}_1$. 
     Step 2 calculates every player's utility in the absence of resource sharing, i.e., the single objective optimization problem in \eqref{eq:opt_single} is solved by every player. 
  To consider the resources allocated in Step 2, the remaining resource capacities and requests for various applications $C, R$ are updated and stored in $C'$ and $R'$, respectively.
  Step $4$ divides players into  two groups $\mathcal{G}_1$ (resource deficit) and $\mathcal{G}_2$ (resource surplus), on the basis of updated capacities and requests. Step $5$ allocates the shared resources of players in $\mathcal{G}_2$ to players in $\mathcal{G}_1$. 
    $\boldsymbol{z}^j$  denotes the total resources
  allocated to an application $j$ previously in Step $2$ and up to the current iteration of Step
 $5$. Up to this point, a given application $j$ can receive resources from multiple players due to resource sharing.
   In step $6$, the utility reflecting an  increase in application satisfaction for native applications of players in $\mathcal{G}_1$ is calculated using the allocation obtained in Step $5$. 
   Below, we prove that Algorithm \ref{algo:alg2} provides an allocation from the core. 
\begin{algorithm}
	\begin{algorithmic}[]
		\State \textbf{Input}: $R, C,$ and vector of players' utility function  $\mathbf{u}$ 
		\State \textbf{Output}: $\mathbf{X}$ 
		\State \textbf{Step $1$:}  $ \mathbf{u}(\mathbf{X}) \leftarrow$0,  $ \mathcal{X} \leftarrow$0,  $ \boldsymbol{v} \leftarrow$0, $\boldsymbol{A} \leftarrow$0, $\boldsymbol{B} \leftarrow$0, $\boldsymbol{z} \leftarrow$0 
		\State \textbf{Step $2$:}
		\For{\texttt{$n \in \mathcal{N}$}}
		\State $ {v(\{n\})}\hspace*{-0.1cm}\leftarrow$\texttt{Objective function value  of   \eqref{eq:opt_single}} 
		\State $  \mathbf{x}_n^1, \cdots, \mathbf{x}_n^{M_n}\leftarrow$\texttt{Allocation decision of \eqref{eq:opt_single}} 
		\State {Update $z^j$'s based on the allocation  decision of \eqref{eq:opt_single} } 
		\EndFor
		\State \textbf{Step $3$:}  Update $C$ and $R$ based on Step 2 \\  $\quad \quad \quad \;  C'\leftarrow$ $C_{updated}$,$\; R'\leftarrow$ $R_{updated}$
		\State \textbf{Step $4$:} Divide the players into two subsets $\mathcal{G}_1$ and $\mathcal{G}_2$ representing players  with resource deficit and resource surplus
		\State \textbf{Step $5$:}
		\For{\texttt{$n \in \mathcal{G}_2$}}
		\State 
		$\mathcal{X}_n\hspace*{-0.1cm}=\hspace*{-0.1cm}\mathbf{X}_n\backslash\{\mathbf{x}_n^1, \cdots, \mathbf{x}_n^{M_n}\}\hspace*{-0.1cm}\leftarrow$
		\texttt{Optimal allocation decision of  \eqref{eq:opt_single_j} $\forall m \in \mathcal{{G}}_1$ }
    \State 		 {$ {A_{n}^j}\hspace*{-0.1cm} \leftarrow$ \texttt{Utility earned by $n \in \mathcal{{G}}_2$ due to resource allocation $\mathcal{X}_n$ }}
	\State {Update } $z^j${'s based on $\mathcal{X}_n$}
		\State Update $C'$ and $R'$
		\EndFor
		\State \textbf{Step $6$:}
		\For{\texttt{$m \in \mathcal{G}_1$}}
		\State $ {B_{m}} \leftarrow$ \texttt{Utility earned by  satisfying users $j\in\mathcal{M}_m$  due to allocation decision in  Step $5$}
		\EndFor
		
	\end{algorithmic}
	\caption{Game-theoretic Pareto optimal allocation (GPOA)}
	\label{algo:alg2}
\end{algorithm}

\begin{theorem}\label{thm:alg1}
	The solution obtained from Algorithm \ref{algo:alg2}  lies in the \emph{core}.
\end{theorem}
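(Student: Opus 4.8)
The plan is to verify directly the two requirements of the core in \eqref{eq:coreTU}: that the GPOA payoff vector $\mathbf{x}$ is feasible for the grand coalition, $\mathbf{x}\in\mathcal{V}(\mathcal{N})$, and that no proper coalition can strictly improve every one of its members. Feasibility is immediate from the construction, since the native allocations of Step~2 solve \eqref{eq:opt_single} and the shared allocations of Step~5 are chosen subject to the same constraints, so the combined allocation respects the capacity bound \eqref{eq:obj1}, the request bound \eqref{eq:obj2}, non-negativity \eqref{eq:obj3}, and the profitability constraint \eqref{eq:obj4}; the induced payoff therefore lies in $\mathcal{V}(\mathcal{N})$. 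Individual rationality, which already rules out every singleton coalition, is equally direct: each player is credited $v(\{n\})$ in Step~2, and Steps~5--6 add only the increments $A_n$ and $B_m$, both non-negative because the net sharing utility $H$ in \eqref{eq:1} (kept positive by \eqref{eq:obj4}) and the request-satisfaction terms are non-negative. Hence $x_n\ge v(\{n\})$ for every $n\in\mathcal{N}$.

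For an arbitrary coalition $S\subsetneq\mathcal{N}$ I would argue by contradiction through the aggregate payoff. Suppose some $\mathbf{y}\in\mathcal{V}(S)$ satisfied $y_n>x_n$ for all $n\in S$; summing would give $\sum_{n\in S}y_n>\sum_{n\in S}x_n$. Because $\mathcal{V}(S)$ is comprehensive and $v(S)$ in \eqref{eq:payoff_function} is by definition the maximal attainable sum of payoffs for $S$, we also have $\sum_{n\in S}y_n\le v(S)$. The whole claim therefore collapses to the single inequality
\begin{align}\label{eq:coreKey}
\sum_{n\in S}x_n\ \ge\ v(S),
\end{align}
for once \eqref{eq:coreKey} is in hand we obtain $v(S)\ge\sum_{n\in S}y_n>\sum_{n\in S}x_n\ge v(S)$, a contradiction, so no blocking coalition exists and $\mathbf{x}\in\mathcal{C}_{NTU}$.

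To establish \eqref{eq:coreKey} I would compare the two shared phases of the construction. Since the strategy is native-first in both the grand coalition and the stand-alone coalition $S$, Step~2 assigns each member of $S$ exactly its stand-alone native utility in either case, so the native contributions to $\sum_{n\in S}x_n$ and to $v(S)$ are identical and it suffices to compare the sharing contributions. In $v(S)$ the surplus members of $S$ may serve only the deficit applications owned inside $S$, whereas in the grand coalition each surplus member of $S$ optimizes over the strictly larger application pool $\mathcal{M}\setminus\mathcal{M}_n$ and each deficit member of $S$ may draw on the strictly larger provider set $\mathcal{G}_2$. Invoking super-additivity \eqref{eq:superADD}, the common-utility assumption $u_n^j=u_m^j$, and the monotone, non-negative form of $H$, the sharing welfare accruing to the members of $S$ inside the grand coalition weakly dominates the purely internal sharing welfare that $S$ can generate on its own, which is exactly \eqref{eq:coreKey}.

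The delicate point, and the step I expect to be the main obstacle, is precisely this last domination, because Step~5 processes the surplus providers sequentially and updates each $z_k^l$ as it proceeds, so a provider belonging to $S$ meets competition inside the grand coalition that it would not meet inside $S$ alone. I would control this by noting that the objective maximized in the Step~5 sub-problem couples the providers only through the already-allocated amounts $z_k^l$, and that the non-negativity of every term in \eqref{eq:payoff_function} means any request that a provider of $S$ is crowded out of serving in the grand coalition is served instead by some other provider, so the matching satisfaction utility is reassigned rather than destroyed and, for the deficit members of $S$, still accrues to $\sum_{n\in S}x_n$. Turning this bookkeeping into a rigorous proof that no welfare internal to $S$ exceeds what its members retain in the grand coalition is the crux of the argument.
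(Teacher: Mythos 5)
Your individual-rationality and group-rationality/feasibility arguments match the paper's: Step 2 credits each player its stand-alone value $v(\{n\})$ from \eqref{eq:opt_single}, and the increments $A_n$, $B_m$ of Steps 5--6 are non-negative, which is exactly how the paper argues these two points. The divergence, and the genuine gap, is in how you handle blocking coalitions. You reduce ``no coalition $S$ can strictly improve all its members'' to the aggregate inequality $\sum_{n\in S}x_n \ge v(S)$, and you correctly flag the crowding-out effect of the sequential Step 5 as the obstacle to proving it. The problem is that this inequality is not merely hard to establish; it is false in general for GPOA. It is the \emph{transferable-utility} core condition, while the game is NTU precisely because utility captured by players outside $S$ cannot be recovered by summing payoffs inside $S$. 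Concrete failure: let $S=\{d,s\}$ with $d$ a deficit player whose single application (utility slope $a$ per unit, deficit $10$) can be served by either $s$ (surplus $10$) or an outside surplus provider $s'$ (surplus $8$); if $s'$ precedes $s$ in Step 5, then $s'$ serves $8$ units and $s$ only the remaining $2$, so $\sum_{n\in S}x_n$ contains sharing utility of order $2w_1a$, whereas in $v(S)$ the provider $s$ serves all $10$ units itself and captures sharing utility of order $10w_1a$, with $d$'s satisfaction term maximal in both cases. Hence $\sum_{n\in S}x_n < v(S)$, and the statement your argument pivots on cannot be proved.

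What saves the theorem is exactly the NTU form of the core in \eqref{eq:coreTU}: blocking requires \emph{every} member of $S$ to improve strictly. In the scenario above, $s$ would indeed gain by deviating, but $d$ cannot: its requests are already fully satisfied in the grand coalition, so its payoff inside $\mathcal{V}(S)$ is capped at the value it already receives. This see-saw is general. A surplus provider in $S$ is crowded out in Step 5 only to the extent that the deficit members' requests in $S$ are served by outsiders, and that outside service appears as (weakly larger) satisfaction payoff for those deficit members in the grand coalition, leaving them unable to strictly improve by deviating. A correct proof must run through this member-by-member argument rather than through the aggregate sum. For completeness: the paper itself does not do this either; its no-blocking step is a one-sentence appeal to super-additivity and monotonicity. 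So your attempt stalls at the same point the paper glosses over, but your chosen reduction makes the situation worse, because the intermediate inequality you defer to future work is provably false rather than merely unproven.
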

\begin{proof}
	To prove the theorem, we need to show that a)  utilities obtained using Algorithm \ref{algo:alg2}   are individually rational and  group rational, and b) no group of players  has the incentive to leave the grand coalition to form another sub-coalition $S \subset \mathcal{N}$. 
	\newline \indent\textbf{Individual Rationality:} (See definition \ref{def:indRational}) For each player $n \in \mathcal{N}$,  $v(\{n\})$ obtained by solving   \eqref{eq:opt_single} is the utility a player  obtains by working alone in the absence of resource sharing. Because the utilities are non-negative, $A_n\geq0,\;\forall n \in \mathcal{{N}}$.
	 Furthermore utilities of  players in $\mathcal{G}_1$ may increase due to  increase in request satisfaction if its applications are provided additional resources by players in  $\mathcal{G}_2$ given by $B_{n}$. 
	 Hence  Algorithm \ref{algo:alg2} produces an  individually rational resource allocation.
	\newline \indent\textbf{Group Rationality:} The value of the grand coalition $v\{\mathcal{N}\}$ as per Equation \eqref{eq:payoff_function} is the sum of different utilities. 
	Steps $2$, $5$ and $6$ of Algorithm \ref{algo:alg2} obtain the sum of utilities as well. Hence the solution obtained as a result of Algorithm \ref{algo:alg2} is group rational. 
	\par Furthermore, due to super-additivity of the game and monotonic non-decreasing nature of the utilities, no group of players has the incentive to form a smaller coalition. Hence Algorithm \ref{algo:alg2} provides a solution from the core.  
\end{proof}

\begin{remark}\label{rem:distributed}

Algorithm \ref{algo:alg2} is a distributed algorithm. All players first allocate resources to their native applications and then update their capacities and requests. The updated requests and capacities are broadcasted by all players to obtain $\mathcal{{G}}_1$ and $\mathcal{{G}}_2$. The order to execute Step $5$ is chosen based on some criteria (see details below).  Players in $\mathcal{{G}}_2$, after allocating resources, send the updated requests  to other players in  $\mathcal{{G}}_2$  that have not yet executed Step $5$.

\end{remark}
\begin{remark}\label{rem:paretooptimality}
	Any payoff allocation from the core  generated by Algorithm 1 is Pareto-optimal.  
\end{remark}
\begin{remark}\label{rem:forlooporder}
	The allocation decision obtained using Algorithm \ref{algo:alg2} always belongs to the core irrespective of the order in which players execute Step $5$. 
\end{remark}\vspace{-0.3in}
\begin{subequations}\label{eq:opt_single_j}
	\begin{align}
	\max_{\mathcal{X}_n}\quad& \sum_{m\in \mathcal{G}_1}\Bigg(\sum_{j \in \mathcal{M}_m}\bigg(w_1^j\big(u_n^j({\boldsymbol{z}}^j+\mathbf{x}_n^j)-u_n^j({\boldsymbol{z}}^j)-\nonumber \displaybreak[0]\\
	&D_n^j(\mathbf{x}_n^j)\big)+\sum_{k \in \mathcal{K}}\Big(\frac{x_{n,k}^j}{r_{m,k}^j-z_k^j}\Big)^2\bigg)\Bigg),\;  \forall n \in \mathcal{G}_2 \label{eq:objsinglej}\\
	\text{s.t.}\quad & \sum_{j} x_{n,k}^{j}\leq C'_{n,k}, \; \forall k \in \mathcal{K},  \forall j \in \mathcal{M}_m,  \forall m \in \mathcal{{G}}_1, \label{eq:singlefirstj} \displaybreak[0]\\
	& x_{n,k}^{j} \leq r^{'j}_{m,k}, \quad k \in \mathcal{K},  \forall j \in \mathcal{M}_m,  \forall m \in \mathcal{{G}}_1, \label{eq:singlesecondj} \displaybreak[1]\\
	&  x_{n,k}^{j} \geq 0, \quad  k \in \mathcal{K},  \forall j \in \mathcal{M}_m,  \forall m \in \mathcal{{G}}_1, \label{eq:singlethirdj} \displaybreak[2]\\
	& u_n^j({{z}}_k^j+{x}_{n,k}^j)-u_n^j({{z}}_k^j) \geq D_n^j({x}_{n,k}^j), \;  k \in \mathcal{K},   \nonumber \\
	& \forall j \in \mathcal{M}_m, \forall m \in \mathcal{{G}}_1. \label{eq:singlefourthj}
	\end{align}
\end{subequations}

Based on Remark \ref{rem:forlooporder}, it is  important to note that Algorithm \ref{algo:alg2} can generate different solutions according to the order in which players execute Step $5$. Some candidate ordering schemes listed below. 

\begin{enumerate}
    \item \textbf{Capacity ascending order (CAO):} Players in $\mathcal{G}_2$ are arranged in an ascending order on the basis of the remaining capacities of a resource $k$, which is then used for executing  Step 5.  
	\item \textbf{Capacity  descending order (CDO):} Players in $\mathcal{G}_2$ are arranged in a descending order on the basis of the remaining capacities of a resource $k$, which is then used for executing  Step 5.  
	
	\item \textbf{Random Order:} The order in which players execute  Step $5$ is random. 
\end{enumerate}

While all these variants provide allocations from the core, finding the most beneficial order for the \texttt{for-loop} in Step $5$ requires evaluating all possible combinations of possible orders for executing the for-loop. 
Since Algorithm \ref{algo:alg2} always provides an allocation from the core,
it is desirable to identify the order of surplus players considered in Step $5$ that also considers physical constraints such as
resource fragmentation across multiple players (service providers).
Toward this goal, we present below a matching-based resource
sharing algorithm that further reduces resource fragmentation while
providing an allocation from the core.

\subsubsection{Polyandrous-Polygamous Matching for Resource Sharing}
Our proposed algorithm is based on polyandrous polygamous matching \cite{baiou2000many} that matches players (service providers) in $\mathcal{{G}}_1$ to $\mathcal{{G}}_2$ with the goal of maximizing utility for players in  $\mathcal{{G}}_2$ and reducing  resource fragmentation for supporting native applications from players  in $\mathcal{{G}}_1$. The idea behind our matching algorithm, inspired from \emph{two-sided markets} \cite{maschler2013}, for reducing resource fragmentation is as follows. 
Players in $\mathcal{{G}}_1$ have specific preferences for which  players in $\mathcal{{G}}_2$ they would like to obtain resources from. 
 On the other hand, players in $\mathcal{{G}}_2$ aim to maximize their utilities. As our framework relies on  monotonic non-decreasing utilities, the objectives of players in  $\mathcal{{G}}_1$ and $\mathcal{{G}}_2$ can be mapped to each other, i.e., the more  resources players in $\mathcal{{G}}_2$ share with players in $\mathcal{{G}}_1$, the higher  will their utilities  be and the lower  the resource fragmentation will be. Therefore, during each round, resources of one player in $\mathcal{{G}}_2$ are assigned to applications of a player in $\mathcal{{G}}_1$ that produces the largest increase in its utility. This  means that a $\mathcal{{G}}_2$ player will provide as many resources as possible (increasing its utility) that in turn should reduce resource fragmentation for applications in $\mathcal{{G}}_1$. From resource fragmentation perspective, our scheme will transform into \emph{polygamous}/many-to-one matching \cite{maschler2013} or \emph{stable marriage}/one-to-one matching\cite{gale1962college} in the best case scenario.   
Below, we discuss the different steps of the algorithm. 
 \par 
In Algorithm \ref{algo:algPoly}, all  variables are initialized in Step $1$.  $\boldsymbol{z}$ is a vector that stores  allocations for  every application $j \in \mathcal{{M}}$ across different players.  In Step $2$, all game players $n \in \mathcal{N}$ allocate resources to their native applications. Request matrices and capacity vectors are updated in Step $3$, and  players are divided into two groups $\mathcal{{G}}_1$ and $\mathcal{{G}}_2$ in Step $4$.  Here $\boldsymbol{A}$ and $\boldsymbol{B}$ records utilities of   players in $\mathcal{{G}}_2$ and $\mathcal{{G}}_1$, respectively. Step $5$ is  the core of  the algorithm. In Step $5a$, Algorithm \ref{algo:matrixConstruction} is used to construct a matching matrix $\mathcal{J}$. 
The elements of matching matrix $\mathcal{J}$  ($m$ rows and $n$ columns) are obtained by solving and obtaining the objective function value of the the problem in \eqref{eq:opt_matching} for every possible pair of $(m\in\mathcal{G}_1, n \in \mathcal{{G}}_2)$. \eqref{eq:opt_matching} maximizes the incremental utility a player in $\mathcal{{G}}_2$ earns by sharing its resources with a player in $\mathcal{{G}}_1$, whereas the second summation term is used to reduce resource fragmentation. 
\eqref{eq:matchingfirst} indicates that the allocated resources cannot exceed the capacity. \eqref{eq:matchingsecond} indicates that the allocation cannot be more than the required resources. \eqref{eq:matchingthird} shows that the allocation decision is non-negative whereas \eqref{eq:matchingfourth} indicates that the incremental increase in utility cannot be less than the communication cost.


Algorithm \ref{algo:matrixConstruction} also provides  total amount of resources used for each element of  $\mathcal{J}$  and allocation decision $^{m}X_n$ that captures how resources are allocated to all applications $j \in \mathcal{{M}}_m$ for $m \in \mathcal{{G}}_1$ at player $n \in \mathcal{{G}}_2$. 
In $5b$, the largest element in $\mathcal{J}$, given by $\mathcal{J}_{m,n}$,  is used to match $m$ to $n$, i.e., the player $n$ (with resource surpluses) shares its resources with the player $m$ (with resource deficits).
If there are multiple maximum values, then the one that uses smallest number of resources is used to choose the matching.  In $5c$, applications $j \in \mathcal{{M}}_m$ are assigned to $n \in \mathcal{{G}}_2$. In $5d$, $\boldsymbol{A}_n$ is updated with the  utility of player $n \in \mathcal{G}_2$ based on the assignment in $5c$. Similarly,  $\boldsymbol{B}_m, \mathcal{X}_n$ and $\boldsymbol{z}^j$ are updated to reflect the increase in application satisfaction,  resources allocated by $n$ to different applications and resources allocated to $j \in \mathcal{{M}}_m$, respectively. Requests and capacities are updated in $5e$ whereas $5f$ updates elements of $\mathcal{{G}}_1$ and $\mathcal{{G}}_2$ if required. 



\begin{definition}{Objection to a matching\cite{maschler2013}:}
	A player $m \in \mathcal{{G}}_1$ and  $n \in \mathcal{{G}}_2$ object to a matching $M$ if they both  prefer being matched to each other than to whom  they are matched by $M$. 
\end{definition}

\begin{definition}{Stable matching\cite{maschler2013}:}
A matching $M$ is stable if there is no pair $m \in \mathcal{{G}}_1$ and   $n \in \mathcal{{G}}_2$ that objects to a matching. 
\end{definition}

\begin{theorem}
	The polyandrous-polygamous matching in Algorithm \ref{algo:algPoly} is stable. 
\end{theorem}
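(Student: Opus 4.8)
The plan is to argue by contradiction, exploiting the fact that---because every utility in our framework is monotonic and non-decreasing---the preferences on both sides of the market are governed by a single common quantity, namely the entries of the matching matrix $\mathcal{J}$. First I would make the preference structure explicit. For a surplus player $n \in \mathcal{G}_2$, the entry $\mathcal{J}_{m,n}$ is exactly the incremental utility $n$ obtains by serving the applications of $m$, so $n$ prefers $m$ to $m'$ precisely when $\mathcal{J}_{m,n} > \mathcal{J}_{m',n}$. For a deficit player $m \in \mathcal{G}_1$, a larger $\mathcal{J}_{m,n}$ corresponds (through the fragmentation-reducing term in \eqref{eq:opt_matching}) to receiving more of its requested resources from a single provider, so $m$ prefers $n$ to $n'$ precisely when $\mathcal{J}_{m,n} > \mathcal{J}_{m,n'}$. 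Thus both rankings are read off the same matrix, and the preferences are \emph{aligned}; this alignment is the structural property that makes a greedy matching stable.

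Next, suppose for contradiction that the matching $M$ returned by Algorithm \ref{algo:algPoly} is not stable, so there is an objecting (blocking) pair $(m^*, n^*)$ that is not matched under $M$ but for which both players strictly prefer one another to (one of) their assigned partners. By the preference characterization above this means $\mathcal{J}_{m^*,n^*} > \mathcal{J}_{m^*,M(m^*)}$ and $\mathcal{J}_{m^*,n^*} > \mathcal{J}_{M(n^*),n^*}$, where the right-hand sides are evaluated at the rounds in which those matches were actually made. I would then invoke the greedy selection rule of Step $5b$: at each round the algorithm commits the globally maximal entry of the current $\mathcal{J}$, breaking ties toward the allocation that uses the fewest resources. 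Consider the earlier of the two rounds in which $m^*$ and $n^*$ received their partners under $M$. At that round the selected entry was maximal, yet $(m^*,n^*)$ had a strictly larger value; the only way the algorithm could have bypassed it is if $m^*$ or $n^*$ had already been saturated---$n^*$ exhausted its shareable capacity, or $m^*$ had all of its requests met---so that the pair $(m^*,n^*)$ was no longer feasible. But saturation of either player contradicts the assertion that the two still strictly prefer each other to a partner obtained at that same or a later round, since a saturated provider can yield $m^*$ no additional resource and a satisfied deficit player gains nothing from $n^*$.

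The remaining work is to turn this outline into a rigorous induction on the rounds of Step $5$, proving that after each round the committed partial matching admits no blocking pair among the already-matched and still-feasible players; here I would use that Step $5e$ only \emph{removes} exhausted capacity and satisfied demand, and Step $5f$ only shrinks $\mathcal{G}_1$ and $\mathcal{G}_2$, so that once a pair is passed over because of saturation it can never later become a profitable deviation. The main obstacle I anticipate is precisely this dynamic bookkeeping: unlike the textbook setting with a single static common ranking, the entries of $\mathcal{J}$ are recomputed each round as capacities and requests are updated by Algorithm \ref{algo:matrixConstruction}, and the many-to-many (polyandrous--polygamous) nature allows a player to be matched repeatedly. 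The key lemma to nail down is therefore a monotonicity statement---each update can only decrease the residual value $\mathcal{J}_{m,n}$ of any still-feasible pair---so that the value at which a pair could ever have been matched never increases across rounds. Establishing that monotonicity is what legitimizes comparing $\mathcal{J}_{m^*,n^*}$ evaluated at different rounds and closes the contradiction, yielding stability of $M$.
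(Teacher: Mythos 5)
Your proposal takes essentially the same route as the paper: a proof by contradiction in which an objecting pair $(m,n)$ is ruled out because the greedy selection in Step $5b$ always commits the maximal entry of $\mathcal{J}$, i.e., the match made for $n$ already maximizes the objective in \eqref{eq:opt_matching}. In fact, the paper's proof is exactly the static core of your argument and stops there --- it never addresses the issue you correctly flag, namely that $\mathcal{J}$ is recomputed each round as capacities and requests shrink, so comparing entries across rounds needs the monotonicity lemma you describe; that lemma is left unproved in your outline, but it is equally absent from (and arguably a gap in) the paper's own one-paragraph proof.
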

\begin{proof}
	We prove this by contradiction. Assume that the matching $M$ obtained from Algorithm \ref{algo:algPoly} is not stable and there exists  a pair $(m \in \mathcal{{G}}_1, n \in \mathcal{{G}}_2)$ that objects to the matching. $M$ has originally matched $m \in \mathcal{{G}}_1$ to $n'$ and $n \in \mathcal{{G}}_2$ to $m'$. 
	The objection by $n$ means that it can achieve a higher utility by being matched with $m$ when compared with the current match $m' \in \mathcal{{G}}_1$. However, this is not possible as  $m'$ was matched with $n$ based on the maximization problem given in \eqref{eq:opt_matching}, i.e., $m'$ provides  the highest utility to $n$ for its resources. 
	Hence, our assumption is wrong and the matching is stable. 
	

\end{proof}
\begin{remark}\label{rem:alg2}
	Algorithm \ref{algo:algPoly} provides an allocation from the core.
\end{remark}
\vspace{-0.1in}
\begin{remark}
The  optimization problems in \eqref{eq:opt_single}, \eqref{eq:opt_single_j} and \eqref{eq:opt_matching} are non-convex. The hardness of these problems depend on the parameters such as the utility functions, requests and capacities. Therefore, it is difficult to evaluate the hardness of these problems, but the instances we consider  in Section \ref{sec:exp_results} are easily solved using existing solvers.
In general, certain  non-convex optimization problems can usually be solved to  optimality by proving that the \emph{strong duality} holds \cite{boyd2004convex},  and then solving the convex dual of these non-convex problems. Detailed discussions on strong duality can be found in \cite{boyd2004convex} whereas \cite{tychogiorgos2013non} presents the necessary and sufficient conditions for strong duality of non-convex problems. If the optimal solutions are not found for  \eqref{eq:opt_single}, \eqref{eq:opt_single_j} or \eqref{eq:opt_matching}, then the allocation will not be from the core. 
\end{remark}

\begin{algorithm}
	\begin{algorithmic}[]
		\State \textbf{Input}: $R, C,$ and vector of players' utility function  $\mathbf{u}$ 
		\State \textbf{Output}: $\mathbf{X},$  $\mathbf{u}(\mathbf{X})$
		\State \textbf{Step $1$:}  $ \mathbf{u}(\mathbf{X}) \leftarrow0$,  $ \mathbf{X} \leftarrow0$,  $ \boldsymbol{v} \leftarrow0$, $[\mathcal{J}]_{|\mathcal{{G}}_1|\times |\mathcal{{G}}_2|}\leftarrow\mathbf{0}, \boldsymbol{z}\leftarrow\mathbf{0}$  
		\State \textbf{Step $2$:}
		\For{\texttt{$n \in \mathcal{N}$}}
		\State $ {v(\{n\})}\leftarrow$\texttt{Objective function value  from  Equation \eqref{eq:opt_single}} 
		\State $ \mathbf{x}_n^1, \cdots, \mathbf{x}_n^{M_n}\hspace*{-0.12in}\leftarrow$\texttt{Allocation decision from \eqref{eq:opt_single}} 
			\State {Update $z^j$'s based on the allocation  decision of \eqref{eq:opt_single} } 
		\EndFor
		\State \textbf{Step $3$:}  Update $C$ and $R$ based on Step 2 \\  $\quad \quad \quad \;  C'\leftarrow$ $C_{updated}$,$\; R'\leftarrow$ $R_{updated}$
		\State \textbf{Step $4$:} Divide the players into two subsets $\mathcal{G}_1$  and \hspace*{0.5in} $\mathcal{G}_2$ representing players  with resource deficit and \hspace*{0.55in}resource surplus \\$\quad\quad  \quad \quad \boldsymbol{A} \leftarrow0$, $\boldsymbol{B} \leftarrow0$
		\State \textbf{Step $5$:} 
		\NoDo
		\While  $\quad \mathcal{{G}}_1\neq \emptyset\quad  ||\quad \mathcal{{G}}_2 \neq \emptyset$
		\State
		\textbf{Step $5a$: }{$[\mathcal{J}]_{|\mathcal{{G}}_1|\times |\mathcal{{G}}_2|}\leftarrow$  Algorithm \ref{algo:matrixConstruction} for constructing\\ matching matrix}
	\State \textbf{Step $5b$: }
		\If  {Multiple maximum values in $[ \mathcal{J}]$}\\
		 {\hspace*{0.35in} Choose one with the lowest $\mathcal{R}_{m,n}$}
		\Else{}\\
	 { \hspace*{0.4in}Obtain the row $m$ and column $n$ of the  maximum \hspace*{0.4in} value in  $[\mathcal{J}]$}
		\EndIf
		\State {\textbf{Step $5c$: }Assign all $j \in \mathcal{{M}}_m, m \in \mathcal{{G}}_1$ to $n \in \mathcal{{G}}_2$ }
		\State {\textbf{Step $5d$: }Update $\boldsymbol{A}_n, \boldsymbol{B}_m, \mathcal{X}_n $ and $\boldsymbol{z}^j$ using the preceding assignment}
		\State\textbf{Step $5e$: } Update $C_n$  and $R_m$
		\State \textbf{Step $5f$: }
		\If  \texttt{ $C_n=0,$}\\
		\quad \quad \quad {$\;\mathcal{{G}}_2=\mathcal{{G}}_2\backslash n$}
		\EndIf
			\If  \texttt{ $R_m=0,$}\\
		\quad \quad \quad {$\;\mathcal{{G}}_1=\mathcal{{G}}_1\backslash m$}
		\EndIf
		\EndWhile
	\end{algorithmic}
	\caption{Polyandrous-Polygamous Matching based Pareto Optimal Allocation (PPMPOA)}
	\label{algo:algPoly}
\end{algorithm}

\begin{subequations}\label{eq:opt_matching}
	\begin{align}
	\max_{^{m}\mathbf{X}_n}\quad& \sum_{j \in \mathcal{M}_m}\Big(w_1^j\big(u_n^j({\boldsymbol{z}}^j+\mathbf{x}_n^j)-u_n^j({\boldsymbol{z}}^j)-D_n^j(\mathbf{x}_n^j)\big)+\nonumber \\
	& \sum_{k \in \mathcal{K}}\Big(\frac{x_{n,k}^j}{r_{m, k}^j-{z}_k^j}\Big)^2\Big),
\label{eq:objmatching}\\
	\text{s.t.}\quad & \sum_{j} x_{n,k}^{j}\leq C'_{n,k}, \; \forall k \in \mathcal{K},  \forall j \in \mathcal{M}_m,  \label{eq:matchingfirst} \displaybreak[0]\\
	& x_{n,k}^{j} \leq r^{'j}_{m,k}, \quad k \in \mathcal{K},  \forall j \in \mathcal{M}_m, \label{eq:matchingsecond} \displaybreak[1]\\
	&  x_{n,k}^{j} \geq 0, \quad  k \in \mathcal{K},  \forall j \in \mathcal{M}_m \label{eq:matchingthird} \displaybreak[2]\\
	& u_n^j({{z}}_k^j+{x}_{n,k}^j)-u_n^j({{z}}_k^j) \geq D_n^j({x}_{n,k}^j), \;  k \in \mathcal{K},  \nonumber \\
	&\forall j \in \mathcal{M}_m. \displaybreak[2] \label{eq:matchingfourth}
	\end{align}
\end{subequations}

\begin{lemma}
	Our resource sharing framework improves request satisfaction. 
\end{lemma}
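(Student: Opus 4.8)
The plan is to read the statement through Definition~\ref{def:rs}: for a native application $j \in \mathcal{M}_n$ and resource $k$, request satisfaction is the ratio of provisioned to requested resources, so the quantity to track (consistent with \eqref{eq:2}) is the aggregate $RS_{\text{share}}(j) = \sum_{k \in \mathcal{K}} \frac{\sum_{m \in \mathcal{N}} x_{m,k}^j}{r_{n,k}^j}$ under sharing, compared against the stand-alone value $RS_{\text{alone}}(j) = \sum_{k \in \mathcal{K}} \frac{x_{n,k}^j}{r_{n,k}^j}$ obtained when provider $n$ solves \eqref{eq:opt_single} in isolation. I would prove the lemma by establishing $RS_{\text{share}}(j) \ge RS_{\text{alone}}(j)$ for every application $j$, with strict inequality for the applications of resource-deficit providers that actually receive shared resources.

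First I would observe that both GPOA (Algorithm~\ref{algo:alg2}) and PPMPOA (Algorithm~\ref{algo:algPoly}) begin, in Step~2, by having each provider solve exactly the stand-alone problem \eqref{eq:opt_single}; hence the native allocation $x_{n,k}^j$ that determines $RS_{\text{alone}}(j)$ is reproduced verbatim and is never subsequently withdrawn, because the sharing phase only assigns the leftover capacity $C'$ recorded in Step~3. Next I would invoke non-negativity of every shared allocation — constraint \eqref{eq:singlethirdj} for GPOA and \eqref{eq:matchingthird} for PPMPOA — so that for each application $j$ and resource $k$ we have $\sum_{m \in \mathcal{N}} x_{m,k}^j = x_{n,k}^j + \sum_{m \ne n} x_{m,k}^j \ge x_{n,k}^j$. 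Since the denominator $r_{n,k}^j$ is fixed, summing over $k$ yields $RS_{\text{share}}(j) \ge RS_{\text{alone}}(j)$ directly, while feasibility constraint \eqref{eq:obj2} guarantees the ratio never exceeds one, so the improvement is well-defined and bounded.

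The remaining task is the strict-improvement claim, which is where I expect the real work to lie. For a deficit provider with an unmet request after Step~2 that is matched to a surplus provider whose spare capacity satisfies the net-utility condition \eqref{eq:obj4} (equivalently \eqref{eq:singlefourthj} or \eqref{eq:matchingfourth}), I would argue that the maximizer of \eqref{eq:opt_single_j} / \eqref{eq:opt_matching} must allocate a strictly positive amount of resource $k$ to application $j$: the objective carries the strictly increasing squared request-satisfaction term together with the non-decreasing net utility, so any feasible positive assignment strictly raises the objective above its zero-allocation value, forcing a positive allocation at optimum and hence $RS_{\text{share}}(j) > RS_{\text{alone}}(j)$. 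The main obstacle is therefore not the monotone inequality, which is essentially bookkeeping, but verifying that the sharing step is genuinely activated — that $\mathcal{G}_1$ and $\mathcal{G}_2$ are both non-empty and that at least one surplus--deficit pair is jointly feasible under \eqref{eq:obj4}. Absent such a feasible pair the framework can only deliver the weak inequality, so I would present the strict conclusion conditionally on the existence of a feasible sharing opportunity, and otherwise fall back on the non-decreasing bound proved in the second step.
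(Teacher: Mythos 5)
Your proposal is correct and follows essentially the same route as the paper's own proof: both decompose each provider's request satisfaction into the stand-alone baseline fixed in Step~2 of Algorithms~\ref{algo:alg2} and \ref{algo:algPoly} plus the increment contributed by the sharing phase, and conclude from non-negativity of that increment. If anything, your treatment is more careful than the paper's, which asserts the sharing increment $RS_{n,2}$ is strictly positive unconditionally before settling for the weak inequality $RS_{n}\ge RS_{n,1}$, whereas you correctly make strict improvement conditional on the existence of a feasible surplus--deficit pair under \eqref{eq:obj4} and otherwise fall back on the weak bound.
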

\begin{proof}
	Let the average application request satisfaction for the non-resource sharing case be given by $RS_{n,1}$, where $RS_{n,1}$ is obtained following the resource allocation decision in Step $2$ of Algorithms \ref{algo:alg2} and \ref{algo:algPoly}, $\forall n \in \mathcal{{N}}$. As players in $\mathcal{{G}}_2$ share their resources with 
	players in $\mathcal{{G}}_1$,  average user satisfaction as a result of the allocation,  given by $RS_{n,2}$ is also  positive, i.e., $>0$. Hence,  user satisfaction achieved using our framework is given by $RS_{n}=RS_{n,1}+RS_{n,2}$, where $RS_{n}\ge RS_{n,1}$.  
\end{proof}
\subsection{Strategy-Proof Incentive}
In this section, we answer the question: \emph{Is there an incentive for a single player or group of players to cheat and report incorrect capacities and application requests  to other players?} That is, will misreporting the capacities or application requests improve the payoff a player or group of players receive? 
The answer is no as given by the following Theorem. 
\begin{theorem}
	When the service providers use Algorithms \ref{algo:alg2} and \ref{algo:algPoly} for resource sharing and allocation, no player or group of players has any incentive to cheat and misreport  capacities  and application requests, i.e., each player $n \in \mathcal{{N}}$ can achieve the highest payoff by truthfully reporting its capacity and requests.
\end{theorem}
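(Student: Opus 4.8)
The plan is to decompose each player's payoff into the three additive components that the algorithms actually produce—the native-application value $v(\{n\})$ obtained in Step~2 as the optimum of \eqref{eq:opt_single}, the sharing utility earned in Step~5 when $n\in\mathcal{G}_2$ (the optimum of \eqref{eq:opt_single_j} for Algorithm~\ref{algo:alg2} or of \eqref{eq:opt_matching} for Algorithm~\ref{algo:algPoly}), and the request-satisfaction gain when $n\in\mathcal{G}_1$—and then to argue that each component is a monotonic non-decreasing function of the resources \emph{actually} allocated, evaluated over a feasible region that is itself monotone in the player's true capacity $C_n$ and true requests $R_n$. The single tool driving the whole argument is that truthful reporting makes the mechanism solve precisely the optimization whose feasible set is the player's genuine achievable set; hence any misreport either shrinks that set (and by monotonicity cannot raise the optimum) or enlarges it with points the player cannot physically realize (and so yields no true gain).

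First I would treat a single player's deviations. For capacity, under-reporting $C_n'<C_n$ tightens the budget constraints \eqref{eq:singlefirst}, \eqref{eq:singlefirstj} and \eqref{eq:matchingfirst}, so the feasible set of every optimization the player participates in only shrinks, and monotonicity gives a weakly smaller optimum in all three components. Over-reporting $C_n'>C_n$ nominally relaxes these constraints, but every realized allocation $x_{n,k}^{j}$ is physically capped by the true capacity, so the extra phantom allocations cannot be delivered and contribute nothing to the true payoff. For requests, under-reporting lowers the bound in \eqref{eq:singlesecond} and \eqref{eq:singlesecondj}, again shrinking the feasible region and both the utility and satisfaction terms. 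Over-reporting relaxes \eqref{eq:singlesecond} but is self-defeating: the satisfaction terms in $F_n^{j}$ and in \eqref{eq:objsinglej}, \eqref{eq:objmatching} are normalized by the reported $r_{n,k}^{j}$, so inflating $r_{n,k}^{j}$ lowers the marginal satisfaction credited per allocated unit, while $u_n^j$ depends only on the genuine allocation, which has no reason to exceed true demand. I would also note that the partition into $\mathcal{G}_1$ and $\mathcal{G}_2$ is a deterministic consequence of the residual capacities and requests after Step~2; since flipping one's group membership requires exactly the capacity or request misreports just ruled out, and since a genuine surplus player earns a non-negative increment from sharing (guaranteed by the net-utility constraint \eqref{eq:obj4}) whereas a genuine deficit player gains only from added satisfaction, no player can profit by disguising its group.

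Finally I would lift the argument to coalitions of jointly-cheating players. The key structural fact is separability: Step~2 solves one instance of \eqref{eq:opt_single} per player, Step~5 of Algorithm~\ref{algo:alg2} solves \eqref{eq:opt_single_j} per surplus player, and Algorithm~\ref{algo:algPoly} solves \eqref{eq:opt_matching} per $(\mathcal{G}_1,\mathcal{G}_2)$ pair, so a coordinated misreport interacts only through the shared budgets and request bounds. A joint under-report can only tighten those coupled constraints and a joint over-report only adds unrealizable allocations, so by the per-component monotonicity no member's realized payoff strictly increases; combined with Theorems~\ref{thm:convex} and \ref{thm:alg1}, which place the truthful allocation in the core and hence make it group-rational and Pareto-optimal, there is no feasible reallocation that improves every colluder at once. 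I expect the main obstacle to be the deficit-player-over-reporting case inside the matching of Algorithm~\ref{algo:algPoly}: an inflated request could in principle attract more shared resources from $\mathcal{G}_2$ and crowd out rival deficit players. Ruling this out cleanly requires invoking that the matching matrix $\mathcal{J}$ is built from the \emph{surplus} players' objective values in \eqref{eq:opt_matching}, which are computed on actual allocations and already penalized by the reported-$r$ normalization, together with the stability established for the matching, so that an over-reporting deficit player cannot raise the objection-free match value assigned to it and therefore cannot gain.
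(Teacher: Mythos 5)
Your proposal is correct and its core is the same argument the paper gives: a case split between under-reporting and over-reporting, resolved by the principle that payoffs are generated by resources \emph{actually} allocated and requests \emph{actually} satisfied, never by the reported figures. The paper's proof is a terse version of exactly this --- under-reporting forfeits achievable payoff (contradicting the assumed gain $\mathcal{U}'_n>\mathcal{U}_n$), over-reporting caps the true payoff at the truthful value $\mathcal{U}_n$ --- so your feasible-set-monotonicity phrasing is a more careful rendering of the same route rather than a new one. That said, you add two things the paper omits, and both are useful. First, the theorem asserts that no \emph{group} of players can profit, yet the paper's proof only ever considers a single deviator; your separability observation (each of \eqref{eq:opt_single}, \eqref{eq:opt_single_j}, \eqref{eq:opt_matching} is solved per player or per $(\mathcal{G}_1,\mathcal{G}_2)$ pair, so colluders interact only through shared constraints), combined with group rationality, is what actually covers the stated claim. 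Second, you correctly isolate the only delicate case: a deficit player inflating its requests to raise its entries of the matching matrix $\mathcal{J}$ in Algorithm \ref{algo:algPoly} and thereby crowd out rival deficit players. Be aware that your resolution of that case is not yet airtight: the reported-$r$ normalization does dilute the fragmentation credit, but the monotone incremental-utility term in \eqref{eq:opt_matching} is maximized over the relaxed bound $x_{n,k}^{j}\le r^{'j}_{m,k}$, so $\mathcal{J}_{m,n}$ \emph{can} rise under an inflated report when the surplus player's capacity is slack. What saves the claim there is a regime argument neither you nor the paper spells out: when capacity is slack, the truthful report already yields full satisfaction of the true requests, so the cheater's true payoff cannot improve; when capacity is scarce, the capacity constraint (not the request bound) determines the allocation, and inflation only shrinks the term $\bigl(x_{n,k}^{j}/(r_{m,k}^{j}-z_{k}^{j})\bigr)^2$, making the cheater \emph{less} attractive in the matching. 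This is the one point where the proof genuinely needs more than the actual-versus-reported principle, and it is the main thing you would still have to tighten.
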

\begin{proof}
Assume a player $n \in \mathcal{{N}}$ improves its  utility $\mathcal{U}_n$ by misreporting its capacity  and application requests, i.e.,   $\mathcal{U}'_n>\mathcal{U}_n$ where $\mathcal{U}'_n$ is the utility obtained by cheating and $\mathcal{U}_n$ is the utility obtained by truthful reporting. Each player receives its payoff for providing resources to its own applications first and then sharing the remaining resources (if any) with the applications of other service providers. Player payoff improves once  resources are provided and user requests are satisfied. 
    \begin{figure*}[h!]
	\centering
	\includegraphics[width=1.2\textwidth]{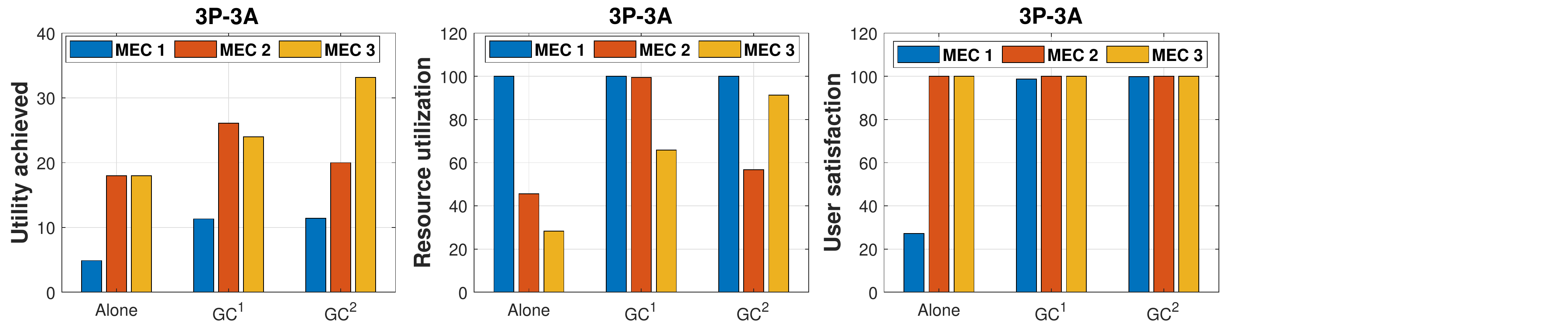}	
	\caption{Performance of our framework with Algorithm \ref{algo:alg2} for $3$ player - $3$ application settings.}
	\protect\label{fig:sat11}
	\vspace{-0.1in}
\end{figure*}

\begin{figure*}[h!]
	\centering
	\includegraphics[width=1.2\textwidth]{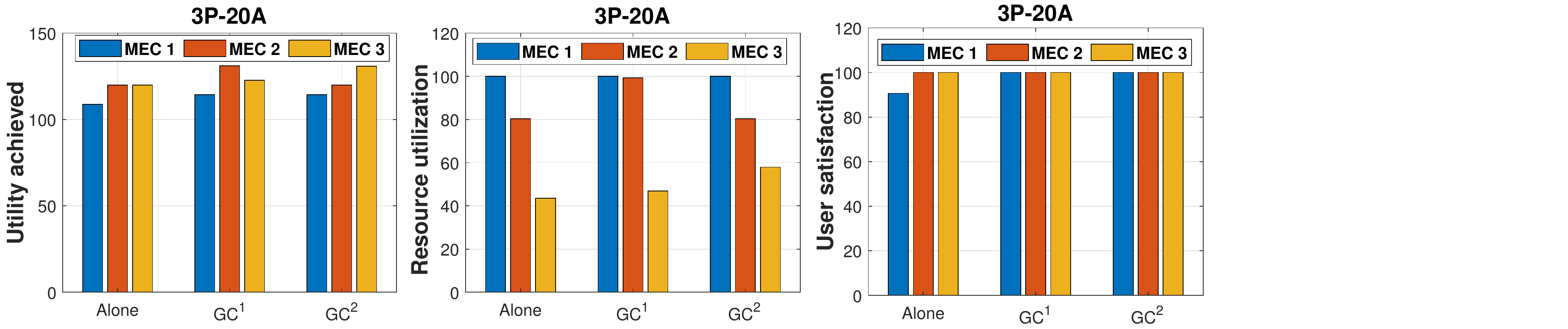}
	\caption{Performance of our framework with Algorithm \ref{algo:alg2} for $3$ player - $20$ application settings.}
	\protect\label{fig:sat12}
	\vspace{-0.1in}
\end{figure*}
\begin{algorithm}
	\begin{algorithmic}[]
		\State \textbf{Input}: $R, C,$ $\mathbf{u}, \boldsymbol{z}$ 
		\State \textbf{Output}: $\mathcal{J}, \mathcal{R},^{m}\hspace*{-0.05in}\boldsymbol{X}_n$
		\For{\texttt{$n \in \mathcal{G}_2$}}
		\For{\texttt{$m \in \mathcal{G}_1$}}
		\State \texttt{$\mathcal{J}_{m,n}\leftarrow$ Payoff for allocation \hspace*{0.4in} decision from \eqref{eq:opt_matching} }
		\State \texttt{$\mathcal{R}_{m,n}\leftarrow$ Resources allocated in \eqref{eq:opt_matching} }
		\EndFor
		\EndFor
	\end{algorithmic}
	\caption{Constructing matching matrix $\mathcal{J}$ }
	\label{algo:matrixConstruction}
\end{algorithm}
The larger the amount of resources provided and requests satisfied, the higher the obtained payoff will be. There are two cases for cheating:
\begin{itemize}
	\item \textbf{Under-reporting\footnote{Reporting a lower value than the actual value.} the capacity and application requests:} In this case, the player will not receive the  payoff possible by satisfying  its applications and fully utilizing its capacity by allocating resources to its own applications and applications of other players. This contradicts the assumption that $\mathcal{U}'_n>\mathcal{U}_n$. Hence, the player has no incentive to under-report its capacity and application  requests. 
	\item \textbf{Over-reporting\footnote{Reporting a larger value than the actual value.}  the capacity and application requests:}  Since the payoff depends on the actual amount of resources provided and requests satisfied rather than the reported capacity and application requests, the highest payoff possible for the player is $\mathcal{U}_n$ contradicting our assumption that  $\mathcal{U}'_n>\mathcal{U}_n$.     	\vspace{-0.1in}
\end{itemize}

\end{proof}

	\vspace{-0.3in}
\section{Experimental Results}\label{sec:exp_results}

We evaluate the performance of  proposed  resource sharing and allocation framework for  several parameter settings as shown in Table \ref{tab:settings}, where $\mathcal{{G}}_1$ and $\mathcal{{G}}_2$ represent the set of service providers with resource deficits and surpluses respectively. 
Each player has three different types of resources ($K=3$), i.e., storage, communication and computation. Without loss of generality, the model can be extended to include other types of resources. 
We use linear and sigmoidal utilities for all the players given below. 
\begin{align}\label{eq:utlinear}
u_n^j({x}_{n,k}^j)=a{x}_{n,k}^j+c
\end{align}
\begin{align}\label{eq:ut1}
u_n^j({x}_{n,k}^j)=\frac{1}{1+e^{-\mu(x_{n,k}^{j}-r^{j}_{n,k})}} 
\end{align}
$\mu$ is chosen to be $0.01$ whereas the constants $a$ and $c$ were randomly generated. 
Requests $R_{n}$ and capacities $C_{n}, \forall n \in \mathcal{N}$ are randomly generated for each setting within a pre-specified range. 
To show the advantage of resource sharing, we allocate larger capacities to certain players that   share  available resources with other domains and assist other players in meeting demand in order to increase their utilities.

\begin{table}[]
	\centering
	\caption{Simulation network settings.}
	\begin{tabular}{|l|l|l|l|}
		\hline
		\textbf{Setting}   & \bf{Parameters} & \textbf{$\mathcal{G}_1$}& \textbf{$\mathcal{G}_2$} \\ \hline
		\textbf{1} & $N=3, M_n=3,\forall n \in \mathcal{N}$   & \{1\}&  \{2, 3\}               \\ \hline
		\textbf{2} & $N=3, M_n=20,\forall n \in \mathcal{N}$     & \{1\}&    \{2, 3\}           \\ \hline
		\textbf{3} & $N=6, M_n=6,\forall n \in \mathcal{N}$         &\{1, 2, 3\}  &   \{4, 5, 6\}         \\ \hline
		\textbf{4} & $N=6, M_n=20,\forall n \in \mathcal{N}$           &\{1, 2, 5\} &     \{3, 4, 6\}   \\ \hline
	\end{tabular}
	\label{tab:settings}
\end{table}
  \begin{figure*}[h!]
	\centering
	\includegraphics[width=1.2\textwidth]{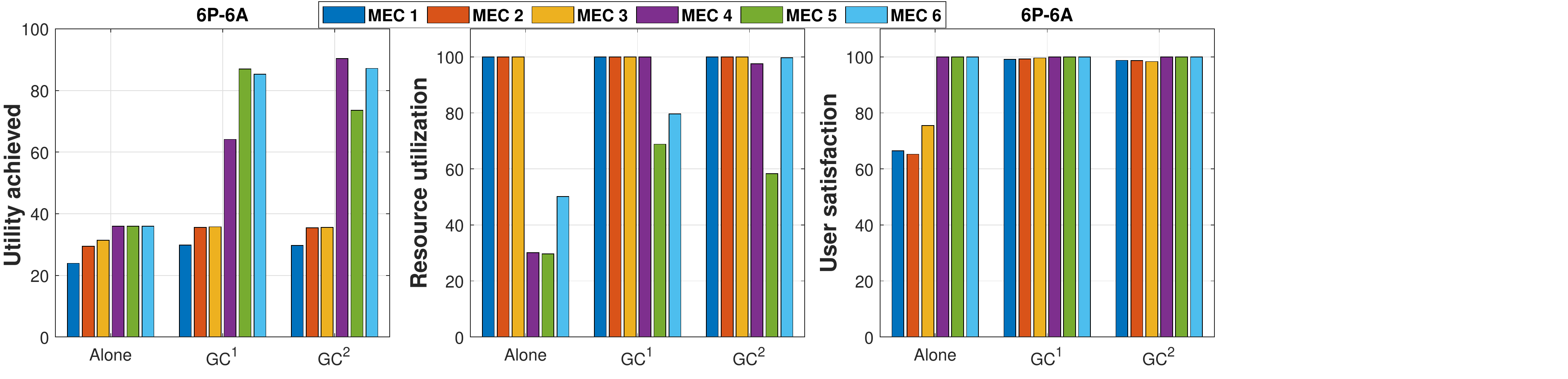}
	\caption{Performance of our framework with Algorithm \ref{algo:alg2} for  $6$ player - $6$ application settings.}
	\protect\label{fig:sat13}
	\vspace{-0.1in}
\end{figure*}

\begin{figure*}[h!]
	\centering
	\includegraphics[width=1.2\textwidth]{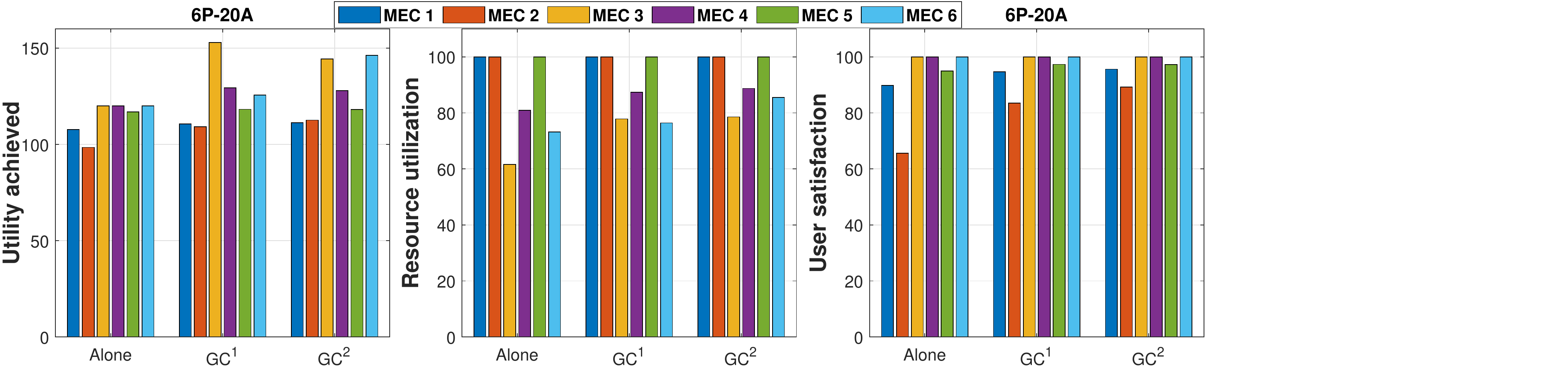}
	\caption{Performance of our framework with Algorithm \ref{algo:alg2} for  $6$ player - $20$ application settings.}
	\protect\label{fig:sat14}
	\vspace{-0.1in}
\end{figure*}
Simulations were run in \texttt{Matlab R2019b} on a \texttt{Core-i7} processor with \texttt{16 GB RAM}. To solve the optimization problems, we used the \texttt{OPTI-toolbox}\cite{currie2012opti}. Below, we provide detailed experimental results for both algorithms. 
\subsection{Results for Algorithm \ref{algo:alg2}}
\subsubsection{Verification of game-theoretic properties}
In Table \ref{tab:coalition}, we present results  for a 3-player 20-application game that verify different game theoretic properties  such as individually rationality, group rationality,  super additivity and show that the obtained allocation lies in the core. Player $1$ has a resource deficit while players $2$ and $3$ have  resource surpluses. 
 The payoffs all players receive in the grand coalition  are at least as good as they would receive working alone. This shows that the solution obtained using Algorithm \ref{algo:alg2} for the grand coalition is individually rational. Similarly, the value of coalition is the sum of payoff all players receive, hence our solution is group rational. Furthermore, as the coalition size increases, coalition value  increases. Hence, the grand coalition has the largest value, which shows the super additive nature of the game. Also,  no set of players has any incentive to divert from the grand coalition and form a smaller coalition. Hence, the grand coalition is stable and the allocation we obtain using Algorithm \ref{algo:alg2} lies in the core. There are two different results for the grand coalition, i.e., $GC^1$ or  $\{1, 2, 3\}$ and $GC^2$ or $\{1, 3, 2\}$ which shows that  changing the order of \texttt{for-loop}  in Step $5$ of Algorithm \ref{algo:alg2} only changes the utility achieved by players in $\mathcal{{G}}_2$. However, btoh  allocations   lie in the  core.  
 Similar results are seen for other settings given in Table \ref{tab:settings}, and hence are omitted here due to space constraints.  For $\{1, 2, 3\}$, player $2$ precedes player $3$ in Step $5$ (of Algorithm \ref{algo:alg2}) while player $3$ precedes player $2$ in Step $5$ for the grand coalition in $\{1, 3, 2\}$. From the results, it is evident that having  player $2$ execute Step $5$ in Algorithm \ref{algo:alg2} before player $3$ is preferable to the order player $3$ before player $2$. 
 However, it is impossible to know that in advance and we need to try all possible combinations.

\begin{table}[]
	\caption{Player payoff in different coalitions for a $3$ player - $20$ application game}
	\begin{tabular}{|l|l|l|l|l|}
		\hline
		\textbf{Coalition}       & \textbf{Player 1} & \textbf{Player 2} & \textbf{Player 3} & \textbf{Value of coalition}  \\ \hline
		\textbf{\{1\}}               & 108.74           & 0.00              & 0.00              & 108.74                 \\ \hline
		\textbf{\{2\}}               & 0             & 120            & 0              & 120                   \\ \hline
		\textbf{\{3\}}               & 0              & 0              & 120             & 120                   \\ \hline
		\textbf{\{1, 2\}}              & 111.51
		         & 131.01         & 0              & 242.52                \\ \hline
		\textbf{\{1, 3\}}              & 112.04           & 0              & 130.75            & 242.79               \\ \hline
		\textbf{\{2, 3\}}              & 0              & 120          & 120           & 240                 \\ \hline
		\textbf{\{1, 2, 3\}$^{1}$} & 114.39       & 131.01             &   122.75        & 368.15            \\ \hline
		\textbf{\{1, 3, 2\}$^{2}$} & 114.39       & 120            & 130.75           & 365.14            \\ \hline
	\end{tabular}
	\label{tab:coalition}
\end{table}

\subsubsection{Efficacy of the resource sharing framework}
Figures \ref{fig:sat11} and \ref{fig:sat12} show the efficacy of our resource sharing framework using Algorithm \ref{algo:alg2} and compare it with a setting in which the MECs are working alone for the 3 players and different application settings given in Table \ref{tab:settings}. We evaluate the impact of the proposed framework on MEC utility, request satisfaction, and resource utilization. It is evident that MECs with resource deficits are able to improve their application satisfaction and utilities (due to increase in application satisfaction) whereas MECs with resource surpluses improve their utilities by sharing their resources, which in turn increases  resource utilization.
For $GC^1$ results in the 6 player-6 application setting, player 4 and 5 precede player 6 in Step $5$ of Algorithm \ref{algo:alg2} whereas in $GC^2$, player 6 precedes players $4$ and $5$. For $GC^1$ results in the $6$ player-20 application setting, players 3 and 4 precede player 6 whereas in $GC^2$, player $6$ precedes players 3 and 4  in step 5 of Algorithm \ref{algo:alg2}. The utility function distributions among players vary for $GC^1$ and $GC^2$, however, both allocations are in the core and are Pareto optimal.  Figures \ref{fig:sat13} and \ref{fig:sat14} show the performance of Algorithm 1 in the 6 player, 6 and 20 application settings, respectively. It is evident from the results that resource sharing improves utility of MECs with resource surplus and user satisfaction of resource deficit MECs. 



\begin{figure*}
	\includegraphics[width=1.2\textwidth]{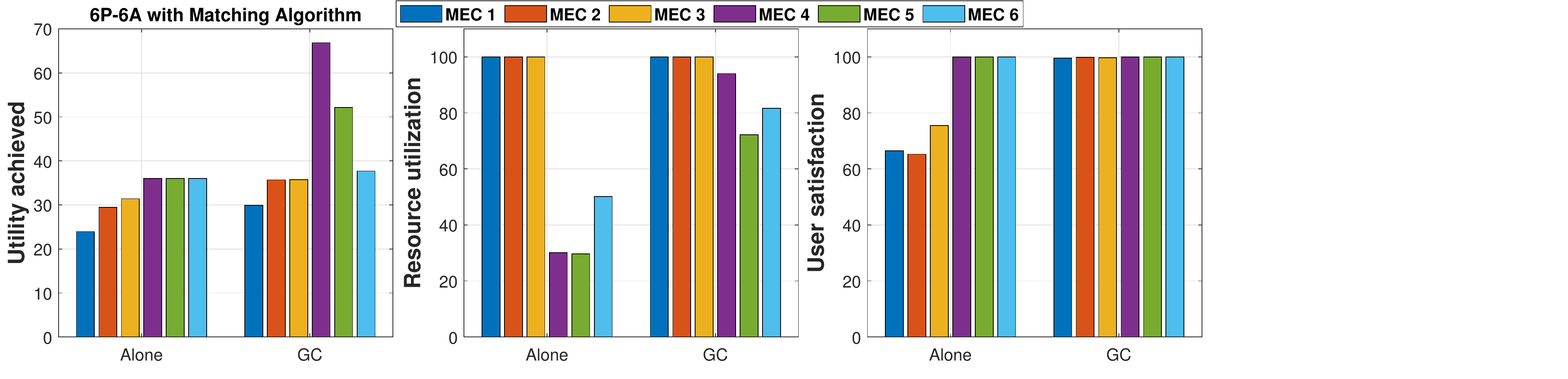}
	\caption{Performance of our framework with Algorithm \ref{algo:algPoly} for $6$ player-$6$ application settings.}
	\protect\label{fig:sat03}
	\vspace{-0.1in}
\end{figure*}
\begin{figure}
	\includegraphics[width=0.95\textwidth]{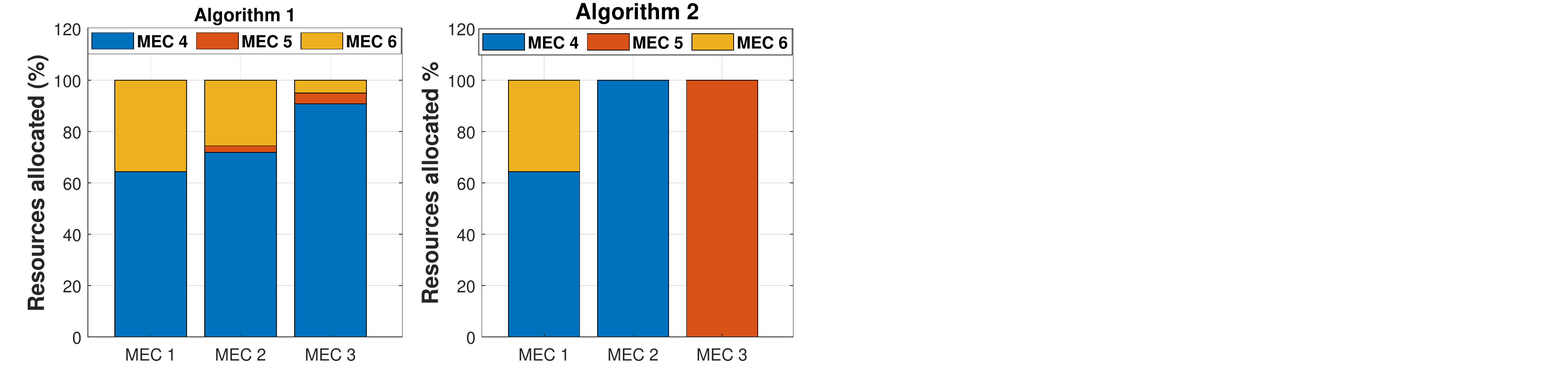}
	\caption{Resource fragmentation for Algorithms \ref{algo:alg2} and \ref{algo:algPoly} in 6 player-6 application setting.}
	\protect\label{fig:sat04}
	\vspace{-0.1in}
\end{figure}

\subsection{Results for Algorithm \ref{algo:algPoly}}
Figure \ref{fig:sat03} presents the results for Algorithm \ref{algo:algPoly}. It is evident that Algorithm \ref{algo:algPoly} improves the utilities and application satisfaction of different MECs (compared with MECs working alone). To highlight the decrease in resource fragmentation due to  Algorithm \ref{algo:algPoly}, we compare it with Algorithm \ref{algo:alg2}.  Figure \ref{fig:sat04} shows the percentage of resources allocated to applications of different MECs in $\mathcal{{G}}_1$ by MECs in $\mathcal{{G}}_2$ using Algorithms \ref{algo:alg2} and \ref{algo:algPoly}.  
 Using Algorithm \ref{algo:alg2} for resource sharing in the 6 player 6 application setting ($GC^1$),
applications of MEC 1 are allocated resources at MEC 4 and 6, whereas applications of both MEC 2 and MEC 3 receive resources from MECs 4, 5, 6. 
 For the same setting and using Algorithm \ref{algo:algPoly}, resources to applications of MEC 1 are provided by MEC 4  and 6. Resources to applications of MEC 2 and 3 are provided by MEC 4 and 5, respectively. 
This shows that resources fragmentation significantly reduces by Algorithm \ref{algo:algPoly} when compared with Algorithm \ref{algo:alg2} at the cost of added computational complexity.



The experimental results show that the proposed resource sharing framework can benefit MECs.  While Algorithm \ref{algo:alg2} is computationally more efficient,  it may result in comparatively larger resource fragmentation. On the other hand, Algorithm \ref{algo:algPoly}  reduces resource fragmentation  at the cost of computational complexity.

	\vspace{-0.10in}
\section{Related Work}\label{sec:related}

There have been a number of solutions proposed in the literature related to resource allocation in edge computing \cite{you2017energy,wang2017computation,sardellitti2015joint,xu2017zenith,he2018s,zhang2017computing,xu2018game,liu2017incentive,zafari2018game}. Xu et al. \cite{xu2017zenith} propose a novel model for allocating resources in an edge computing setting where the allocation of distributed edge resources is decoupled from service provisioning management at the service provider side. The authors develop an auction-based resource sharing contract establishment and resource allocation that maximizes the utilities of service providers and edge computing infrastructure providers.  He et al. \cite{he2018s} relax the long-held assumption that storage resources are not shareable and study the optimal allocation of both shareable and non-shareable resources in a mobile edge computing setting. The authors consider the joint problem of service placement and request scheduling, and propose constant-factor approximation algorithm as the problem is proven to be NP-hard. Liu et al. \cite{liu2017incentive}  model the interaction among cloud service operators and edge service owners as a Stackelberg game for maximizing the utilities of both cloud and edge service providers. The authors obtain the optimal payment along with the computation offloading strategies. Zhang et al. \cite{zhang2017computing} propose an optimization framework, formulated as a Stackelberg equilibrium,  for edge nodes, data service operators and service subscribers that provides optimal resource allocation in a distributed manner. Xu et al. \cite{xu2018game} propose a secure caching scheme in heterogeneous networks for multi-homed subscribers that relies on a trust mechanism for verifying the reliability of edge computing-enabled small cell base stations. The authors also propose a Chinese reminder theorem  based protocol for preserving privacy. A Stackelberg game is used to model the interaction between the mobile users and the edge cloud and the goal is to maximize  utilities of both users and service providers. 
\par Our work in this paper differs from \cite{he2018s,xu2017zenith,xu2018game,liu2017incentive,zhang2017computing} as we consider  objectives of different service providers and allow resource sharing among  edge cloud service providers.  In \cite{zafari2019game}, we consider the strategy where the players do not differentiate between their own applications and applications of other players, and propose a centralized algorithm that provides an allocation from the core. 
This work is an extension of our earlier work in  \cite{zafari2018game}, where we model the resource sharing among mobile edge clouds as a canonical cooperative game with transferable utility (TU). However, in this work, we  model the problem as an NTU game which is more generic than the TU game \cite{roger1991game}. Unlike \cite{zafari2018game}, we do not rely on Shapley value, due to high complexity for obtaining a solution. Furthermore, we have not included application  satisfaction  and resource fragmentation as part of the resource sharing game in \cite{zafari2018game}.

	\vspace{-0.10in}
\section{Conclusions}\label{sec:conclusion}
In this paper, we have proposed a cooperative game theory based resource sharing and allocation framework for edge clouds. We showed that for  monotonic, non-decreasing and non-negative positive utilities, resource sharing  among edge clouds can be modeled as a cardinally convex canonical  game. We have proved that the core exists and proposed  two efficient  algorithms that  provide allocation from the core. Hence, the obtained solutions are Pareto optimal and the grand coalition of the edge service providers is stable. Furthermore, we have reduced resource fragmentation using the matching based algorithm that also provides an allocation in the core. 
Experimental results show that  utilities of all service providers and   user satisfaction are improved,  when compared with  edge clouds working alone, using our framework.


  \section*{Acknowledgments}
  
	This work was supported by the U.S. Army Research Laboratory and the U.K. Ministry of Defence under Agreement Number W911NF-16-3-0001. The views and conclusions contained in this document are those of the authors and should not be interpreted as representing the official policies, either expressed or implied, of the U.S. Army Research Laboratory, the U.S. Government, the U.K. Ministry of Defence or the U.K. Government. The U.S. and U.K. Governments are authorized to reproduce and distribute reprints for Government purposes notwithstanding any copy-right notation hereon.  Faheem Zafari also acknowledges the financial support by EPSRC Centre for Doctoral Training in High Performance Embedded and Distributed Systems (HiPEDS, Grant Reference EP/L016796/1), and Department of Electrical and Electronics Engineering, Imperial College London.

\ifCLASSOPTIONcaptionsoff
  \newpage
\fi

\bibliographystyle{unsrt}
\bibliography{refs}

\begin{thebibliography}{10}

\bibitem{he2018s}
Ting He, Hana Khamfroush, Shiqiang Wang, Tom La~Porta, and Sebastian Stein.
\newblock It’s {H}ard to {S}hare: {J}oint {S}ervice {P}lacement and {R}equest
  {S}cheduling in {E}dge {C}louds with {S}harable and {N}on-sharable
  {R}esources.
\newblock In {\em 2018 IEEE 38th International Conference on Distributed
  Computing Systems (ICDCS)}, pages 365--375. IEEE, 2018.

\bibitem{jia2015optimal}
Mike Jia, Jiannong Cao, and Weifa Liang.
\newblock Optimal {C}loudlet {P}lacement and {U}ser to {C}loudlet {A}llocation
  in {W}ireless {M}etropolitan {A}rea {N}etworks.
\newblock {\em IEEE Transactions on Cloud Computing}, 2015.

\bibitem{cho2017survey}
Jin-Hee Cho, Yating Wang, Ray Chen, Kevin~S Chan, and Ananthram Swami.
\newblock A {S}urvey on {M}odeling and {O}ptimizing {M}ulti-{O}bjective
  {S}ystems.
\newblock {\em IEEE Communications Surveys \& Tutorials}, 19(3):1867--1901,
  2017.

\bibitem{maschler2013}
Michael Maschler, Eilson Solan, and Shmuel Zamir.
\newblock {\em Game theory}.
\newblock Cambridge University Press, 2013.

\bibitem{han2012game}
Zhu Han, Dusit Niyato, Wali Saad, Tamir Basar, and Are Hjørungnes.
\newblock {\em Game {T}heory in {W}ireless and {C}ommunication {N}etworks:
  {T}heory, {M}odels, and {A}pplications}.
\newblock Cambridge University Press, 2012.

\bibitem{roger1991game}
Roger~B Myerson.
\newblock {\em Game theory}.
\newblock Harvard university press, 2013.

\bibitem{shapley1971cores}
Lloyd~S Shapley.
\newblock Cores of convex games.
\newblock {\em International journal of game theory}, 1(1):11--26, 1971.

\bibitem{boyd2004convex}
Stephen Boyd and Lieven Vandenberghe.
\newblock {\em Convex {O}ptimization}.
\newblock Cambridge university press, 2004.

\bibitem{sharkey1981convex}
William~W Sharkey.
\newblock Convex games without side payments.
\newblock {\em International Journal of Game Theory}, 10(2):101--106, 1981.

\bibitem{sharkey1982cooperative}
William~W Sharkey.
\newblock {C}ooperative {G}ames with {L}arge {C}ores.
\newblock {\em International Journal of Game Theory}, 11(3-4):175--182, 1982.

\bibitem{baiou2000many}
Mourad Ba{\i}ou and Michel Balinski.
\newblock Many-to-many matching: stable polyandrous polygamy (or polygamous
  polyandry).
\newblock {\em Discrete Applied Mathematics}, 101(1-3):1--12, 2000.

\bibitem{gale1962college}
David Gale and Lloyd~S Shapley.
\newblock College admissions and the stability of marriage.
\newblock {\em The American Mathematical Monthly}, 69(1):9--15, 1962.

\bibitem{tychogiorgos2013non}
Georgios Tychogiorgos, Athanasios Gkelias, and Kin~K Leung.
\newblock A {N}on-{C}onvex {D}istributed {O}ptimization {F}ramework and its
  {A}pplication to {W}ireless {A}d-hoc {N}etworks.
\newblock {\em IEEE Transactions on Wireless Communications}, 12(9):4286--4296,
  2013.

\bibitem{currie2012opti}
Jonathan Currie, David~I Wilson, Nick Sahinidis, and Jose Pinto.
\newblock {OPTI}: {L}owering the {B}arrier between {O}pen {S}ource {O}ptimizers
  and the {I}ndustrial {MATLAB} {U}ser.
\newblock {\em Foundations of computer-aided process operations}, 24:32, 2012.

\bibitem{you2017energy}
Changsheng You, Kaibin Huang, Hyukjin Chae, and Byoung-Hoon Kim.
\newblock Energy-{E}fficient {R}esource {A}llocation for {M}obile-{E}dge
  {C}omputation {O}ffloading.
\newblock {\em IEEE Transactions on Wireless Communications}, 16(3):1397--1411,
  2017.

\bibitem{wang2017computation}
Chenmeng Wang, Chengchao Liang, F~Richard Yu, Qianbin Chen, and Lun Tang.
\newblock Computation {O}ffloading and {R}esource {A}llocation in {W}ireless
  {C}ellular {N}etworks with {M}obile {E}dge {C}omputing.
\newblock {\em IEEE Transactions on Wireless Communications}, 16(8):4924--4938,
  2017.

\bibitem{sardellitti2015joint}
Stefania Sardellitti, Gesualdo Scutari, and Sergio Barbarossa.
\newblock Joint {O}ptimization of {R}adio and {C}omputational {R}esources for
  {M}ulticell {M}obile-{E}dge {C}omputing.
\newblock {\em IEEE Transactions on Signal and Information Processing over
  Networks}, 1(2):89--103, 2015.

\bibitem{xu2017zenith}
Jinlai Xu, Balaji Palanisamy, Heiko Ludwig, and Qingyang Wang.
\newblock Zenith: {U}tility-{A}ware {R}esource {A}llocation for {E}dge
  {C}omputing.
\newblock In {\em 2017 IEEE International Conference on Edge Computing (EDGE)},
  pages 47--54. IEEE, 2017.

\bibitem{zhang2017computing}
Huaqing Zhang, Yong Xiao, Shengrong Bu, Dusit Niyato, F~Richard Yu, and Zhu
  Han.
\newblock Computing {R}esource {A}llocation in {T}hree-{T}ier {I}o{T} {F}og
  {N}etworks: {A} {J}oint {O}ptimization {A}pproach {C}ombining {S}tackelberg
  {G}ame and {M}atching.
\newblock {\em IEEE Internet of Things Journal}, 4(5):1204--1215, 2017.

\bibitem{xu2018game}
Qichao Xu, Zhou Su, Qinghua Zheng, Minnan Luo, Bo~Dong, and Kuan Zhang.
\newblock Game {T}heoretical {S}ecure {C}aching {S}cheme in {M}ulti-{H}oming
  {E}dge {C}omputing-enabled {H}eterogeneous {N}etworks.
\newblock {\em IEEE Internet of Things Journal}, 2018.

\bibitem{liu2017incentive}
Yang Liu, Changqiao Xu, Yufeng Zhan, Zhixin Liu, Jianfeng Guan, and Hongke
  Zhang.
\newblock Incentive {M}echanism for {C}omputation {O}ffloading using {E}dge
  {C}omputing: {A} {S}tackelberg {G}ame {A}pproach.
\newblock {\em Computer Networks}, 129:399--409, 2017.

\bibitem{zafari2018game}
Faheem Zafari, Jian Li, Kin~K Leung, Don Towsley, and Ananthram Swami.
\newblock A {G}ame-{T}heoretic {A}pproach to {M}ulti-{O}bjective {R}esource
  {S}haring and {A}llocation in {M}obile {E}dge.
\newblock In {\em Proceedings of the 2018 on Technologies for the Wireless Edge
  Workshop}, pages 9--13. ACM, 2018.

\bibitem{zafari2019game}
Faheem Zafari, Kin~K Leung, Don Towsley, Prithwish Basu, and Ananthram Swami.
\newblock A {G}ame-{T}heoretic {F}ramework for {R}esource {S}haring in
  {C}louds.
\newblock In {\em 2019 12th IFIP Wireless and Mobile Networking Conference
  (WMNC)}, 2019.

\end{thebibliography}

\end{document}